\newtheorem{theorem}{Theorem}[section]
\newtheorem{corollary}{Corollary}[section]
\newtheorem{assumption}{Assumption}[section]
\theoremstyle{definition}
\newtheorem{remark}{Remark}[section]
\title{Asymptotic Efficiency Bounds for a Class of Experimental Designs}
\author{Timothy B. Armstrong\thanks{email: timothy.armstrong@usc.edu.  Support
    from National Science Foundation Grant SES-2049765 is gratefully acknowledged.}  \\
University of Southern California}
\date{\today}
\begin{document}

\maketitle

\begin{abstract}
  We consider an experimental design setting in which units are assigned to
  treatment after being sampled sequentially from an infinite population.
  We derive asymptotic efficiency bounds that apply to data from any experiment
  that assigns treatment as a (possibly randomized) function of covariates and
  past outcome data,
  including stratification on covariates and adaptive designs.
  For estimating the average treatment effect of a binary treatment,
  our results show that
  no further first order asymptotic efficiency improvement is possible relative
  to an estimator that achieves the \citet{hahn_role_1998} bound in an
  experimental design where the propensity
  score is chosen to minimize this bound.
  Our results also apply to settings with multiple treatments with possible
  constraints on treatment, as well as covariate based sampling of a single
  outcome.

\end{abstract}

\section{Introduction}

It is common practice in the design of experiments to use baseline covariates or
data from past waves to inform sampling or treatment assignment.
An example is stratification, in which units are grouped into blocks using
baseline covariates, and then randomized to treatment or control separately
within each block, thereby ensuring that the covariate distribution is
``balanced'' between treatment and controls.
In a review of a selection of research articles using experiments in
development economics, \citet{bruhn_pursuit_2009} report that about $3/4$ of
these articles use some form of stratification.
Further description and discussion of such designs are given in
survey articles \citep[][]{duflo_chapter_2007}
and textbooks
\citep[][]{imbens_causal_2015,rosenberger_randomization_2015}.
See also
\citet{bugni_inference_2018} for further references.

Such designs have received renewed interest in the theoretical literature, with
several papers deriving asymptotic approximations to the sampling distribution
of estimators and test statistics in such designs \citep[see, among
others,][]{bugni_inference_2018,bai_inference_2021}.
One goal of this literature has been to design experiments that improve the
asymptotic efficiency of estimators and tests.
Consider estimating the average treatment effect (ATE) of a binary treatment.
For a given experimental design leading to iid data,
the efficiency bound of \citet{hahn_role_1998} gives the smallest possible
asymptotic variance.
This optimal asymptotic variance depends on the
experimental protocol through the propensity score (the conditional probability
of treatment given covariates).  A recent literature
\citep{hahn_adaptive_2011,tabord-meehan_stratification_2023,cytrynbaum_optimal_2023}
has considered the problem of implementing an
experiment that makes the variance bound of \citet{hahn_role_1998} as small as
possible.

A key component of the experimental designs proposed in this recent literature
is the use of data from past waves or stratification on baseline covariates as a
part of the rule for assigning treatment.
Due to the non-iid nature of the resulting data, the \citet{hahn_role_1998}
bound no longer applies.
Thus, it is unclear whether the \citet{hahn_role_1998}
bound is still an appropriate benchmark once such experimental protocols are
allowed.
Does the \citet{hahn_role_1998} bound still give the optimal asymptotic variance
in such settings?  Or does the use of stratification or other
dependence-inducing experimental designs allow for further improvement that the
aformentioned literature is leaving on the table?

In this paper, we derive asymptotic efficiency bounds in a general setting that
allows for such designs.  Applied to the case of a binary treatment, our results
show that the optimized \citet{hahn_role_1998} bound indeed gives a bound
for the performance of any estimator or test with data from any experimental
design in this general setting.
Thus, no further efficiency improvement is possible using other experimental
designs.

To see how our main result is derived, recall that semiparametric efficiency
bounds such as those derived by \citet{hahn_role_1998} are obtained by deriving
bounds in a parametric submodel.  The sharpest
bound that can be obtained from a parametric submodel
gives the
semiparametric efficiency bound, and the submodel that yields this bound is
called the least favorable submodel.
Thus, to show that the bound continues to hold under arbitrary experimental
designs, it suffices to show that the same efficiency bound holds in the least
favorable submodel.
To this end, we derive
a likelihood expansion and local asymptotic
normality theorem that applies to arbitrary experimental designs that assign
treatment after observing the entire set of covariates and past outcome values
for an independent sample from an infinite population.
To derive these results, we apply techniques used in the recent literature
deriving asymptotic distributions of estimators in related settings \citep[in
particular, we apply a martingale representation similar to those used in
][]{abadie_martingale_2012} to a Le Cam style local expansion of
the likelihood ratio.
Applying these results to the least favorable submodels used to derive the
corresponding bounds in the iid case then gives the efficiency bounds.

As we discuss in more detail in Section \ref{achieving_bound_sec}, one can
achieve the
\citet{hahn_role_1998} variance bound either through independent treatment assignment or through
treatment assignment rules that use
stratification on baseline covariates.
Given that one can achieve the same asymptotic variance using these either approach,
one may use other criteria to choose between different
treatment rules that yield the same optimal asymptotic variance.
For example, one may opt for stratification over iid sampling because it allows
one to achieve the optimal asymptotic variance using a simpler estimator.
We discuss practical implications and limitations of our results and other
results from the literature in Section \ref{recommendations_for_practice_sec}.

Several papers written around the same time as this one consider related
problems involving asymptotic efficiency bounds in experiments.
\citet{bai_efficiency_2023} and \citet{rafi_efficient_2023} consider a setting
similar to ours, but consider efficiency among certain restricted classes of
treatment rules involving covariate based stratification.  This differs from our main efficiency bounds
(Theorems
\ref{ATE_LAN_thm} and \ref{general_LAN_thm})
which do not
restrict the treatment rule or impose only cost constraints,
although our likelihood expansion and general local asymptotic normality result
(Theorem \ref{lr_expansion_thm} and Corollary \ref{lr_clt_corollary}) are useful
as technical tools in these other settings.
Another literature
\citep{adusumilli_risk_2023,kuang_weak_2023,hirano_asymptotic_2023} focuses on bandit problems
and related settings.  While these papers consider interesting dynamic problems
that fall outside of the scope of the present paper (for example, deciding when
to end an experiment early in the interest of the welfare of experimental
subjects), they do not address whether
experimental design choices such as stratified randomization can be used to
improve on efficiency bounds for iid data.

The rest of this paper is organized as follows.
Section \ref{informal_description_sec} gives an informal description of our results in a simple setting
with a binary treatment and no constraints on the experimental design.
Section \ref{main_results_sec} describes the formal setup, and includes our main technical
results.
Section \ref{binary_ate_formal_sec} applies these results to provide a formal statement of the
optimality result in the simple setting in Section \ref{informal_description_sec}.
Section \ref{general_te_section} considers a more general setting with multiple
treatments and possible constraints on overall treatment and sampling.
Section \ref{achieving_bound_sec} discusses approaches for achieving the
efficiency bounds derived in earlier sections.
Section \ref{recommendations_for_practice_sec} discusses practical implications
and limitations of the results.
Proofs are given in an appendix.

\section{Informal Description of Results in a Simple Case}\label{informal_description_sec}

Consider the case of a binary treatment.
Unit $i$ has potential outcomes $Y_i(1)$ and $Y_i(0)$ under treatment and
non-treatment.
In addition, there is a vector of baseline covariates $X_i$ associated with
individual $i$.
We assume that $(X_i,Y_i(0),Y_i(1))$ are
drawn iid from some population, and we are interested in the ATE
$E[Y_i(1)-Y_i(0)]$ for this population.
The researcher first observes a sample $X_1,\ldots,X_n$ of baseline covariates.  The researcher chooses
a treatment assignment $W_{n,i}$ for each unit $i$, and observes $Y_i(W_{n,i})$ for this
unit.  The treatment assignment $W_{n,i}$ can depend on the entire sample of
baseline covariates, as well as past outcomes $Y_j(W_{n,j})$ for
$j=1,\ldots,i-1$.\footnote{We subscript by $n$ as well as $i$ since the
treatment assignment rule depends on the entire sample $X_1,\ldots,X_n$ and can
therefore vary arbitrarily with $n$; see Section \ref{main_results_sec} for a formal
description of our notation.}

One possible design is to assign treatment independently across $i$, with
$P(W_{i}=1|X_i)=e(X_i)$.  The conditional treatment probability $e(x)$ is
referred to in the literature as the propensity score.  This yields iid data, so
that the semiparametric efficiency bound of \citet{hahn_role_1998} applies,
giving
\begin{align}\label{hahn_seb_var_eq}
  v_{e(\cdot)} = var\left( \mu(X_i,1)-\mu(X_i,0) \right)
  +E\frac{\sigma^2(X_i,0)}{1-e(X_i)}
  +E\frac{\sigma^2(X_i,1)}{e(X_i)}
\end{align}
as a bound for the asymptotic variance of an estimator of the ATE, where
$\mu(x,w)=E[Y_i(w)|X_i=x]$ and $\sigma^2(x,w)=var(Y_i(w)|X_i=x)$.
We can choose the propensity score $e(\cdot)$ to minimize this bound by taking
first order conditions: the optimal propensity score $e^*(\cdot)$ satisfies
\begin{align}\label{neyman_allocation_eq}
  \frac{\sigma^2(x,0)}{[1-e^*(x)]^2}
  = \frac{\sigma^2(x,1)}{e^*(x)^2}.
\end{align}
Following the literature, we refer to this as the Neyman allocation, after
\citet{neyman_two_1934}.

Since $e^*()$ requires knowledge of the unknown conditional
variance $\sigma^2(x,w)$, this design is not feasible.  However, a recent literature 
has explored feasible approaches to designing an experiment and estimator that
minimizes the asymptotic variance in (\ref{hahn_seb_var_eq}).  These
experimental designs incorporate the following steps:
\begin{itemize}
\item[Step 1:] Designate the first part of the sample (say, the first
$n_{\operatorname{pilot}}$ observations) as a pilot sample and use these
observations to obtain a preliminary estimate
$\hat\sigma^2_{\operatorname{pilot}}(x,d)$ of the conditional variance function
$\sigma^2(x,d)$.
Formally, the pilot
 size $n_{\operatorname{pilot}}$ must be chosen so that $n_{\operatorname{pilot}}/n\to
 0$ and $n_{\operatorname{pilot}}\to\infty$.

\item[Step 2:] Plug this estimate
$\hat\sigma^2_{\operatorname{pilot}}(x,d)$ into the formula (\ref{neyman_allocation_eq}) to
get an estimate $\hat e^*_{\operatorname{pilot}}(x)$ of the Neyman allocation.

\item[Step 3:] Use the estimate $\hat e^*_{\operatorname{pilot}}(x)$ to assign
treatment to the remaining observations $n_{\operatorname{pilot}}+1,\ldots,n$:
after observing $X_i$, the probability of assigning unit $i$ to treatment is
$\hat e^*_{\operatorname{pilot}}(X_i)$.
\end{itemize}

Using the resulting data, one can then achieve the optimized bound
$v_{e^*(\cdot)}$ using an ATE estimator that flexibly adjusts for covariates or
uses a flexible estimate of the propensity score
\citep{hahn_adaptive_2011}.\footnote{The formal results in
  \citet{hahn_adaptive_2011} focus on the case where
  $n_{\operatorname{pilot}}/n$ converges to a positive constant $\kappa$, with
  $v_{e^*(\cdot)}$ obtained in the limiting case as $\kappa\to 0$.}  For example, if $X_i$ takes
on a finite number of values, one can form the estimate $\hat
\mu(x,d)=\frac{\sum_{i:X_i=x,D_i=d} Y_i}{\# \{i:X_i=x,D_i=d\}}$ and form the estimator
$\frac{1}{n}\sum_{i=1}^n [\hat \mu(X_i,1)-\hat \mu(X_i,0)]$.
Alternatively, if one incorporates stratification on $X_i$ in the treatment
assignment in Step 3, then one can achieve the optimized bound $v_{e^*(\cdot)}$
without the need to flexibly adjust for covariates in the estimation step
\citep{tabord-meehan_stratification_2023,cytrynbaum_optimal_2023}.
We provide a more detailed discussion of these two strategies in Section \ref{achieving_bound_sec}.

Such designs, however, lead to dependent data that violates the assumptions used
in the \citet{hahn_role_1998} bound,
making it unclear whether this bound remains an appropriate benchmark once
non-iid treatment assignments are allowed.
One of the main contributions of this paper is to show that the
variance bound $v_{e^*(\cdot)}$ still applies to these designs, as well as any other experimental
design for assigning treatment as a function of past values and the entire
vector of baseline covariates.  Thus, the combinations of
estimators and experimental designs in
\citet{hahn_adaptive_2011,tabord-meehan_stratification_2023,cytrynbaum_optimal_2023}
are indeed asymptotically optimal among any such design with any possible
estimator.

Formally, semiparametric efficiency bounds amount to a statement that no uniform
efficiency improvement is possible over a class of distributions that is rich
enough to include a particular one dimensional submodel, called a ``least
favorable submodel.''  Our results show that this statement continues to hold
for any experimental design in our setup, with the same least favorable submodel
as in the iid case.

\section{Setup and Main Results}\label{main_results_sec}

This section presents our formal setup and main technical results.
Section \ref{setup_subsection} presents notation and sampling assumptions.
Section \ref{submodel_subsection} presents the assumptions on parametric submodels.
Section \ref{lan_subsection} presents our main likelihood expansion and local asymptotic
normality theorem.

\subsection{Setup and Sampling Assumptions}\label{setup_subsection}

We consider a setting in which baseline covariates $X_i$ and potential outcomes
$\{Y_i(w)\}_{w\in\mathcal{W}}$ are associated with unit $i$, where
$\mathcal{W}$ is a finite set of
possible treatment assignments.
We assume that $X_i,\{Y_i(w)\}_{w\in\mathcal{W}}$ are drawn iid from some
population.
The researcher chooses a treatment assignment $W_{n,i}$ for each observation
$i$, and observes $X_i$ and $Y_{n,i}=Y_i(W_{n,i})$ for each observation $i$.
In forming this assignment rule, the researcher first observes the entire sample
$X^{(n)}=(X_1,\ldots,X_n)$ of covariates.  The rule is then allowed to depend
sequentially on observed outcome variables.  Let
$Y_n^{(i)}=(Y_{n,1},\ldots,Y_{n,i})$.  The treatment rule is given by
$W^{(n)}=(W_{n,1},\ldots,W_{n,n})$ where $W_{n,i}=w_{n,i}(X^{(n)},Y_n^{(i-1)},U)$ is a measurable function of $(X^{(n)},Y_n^{(i-1)},U)$
and $U$ is a random variable independent of the sample, which allows for
randomized treatment rules.
Based on this data, the researcher then forms an estimator or test for some
parameter of the population distribution of $X_i,\{Y_i(w)\}_{w\in\mathcal{W}}$.

We use the convention of labeling treatment groups $w$ by nonnegative integers,
although we do not require that the treatment groups have any natural ordering.
We will also consider settings where one allows
for unit $i$ not to be assigned to any treatment group, in
which case we include the treatment group $w=-1$ in the set $\mathcal{W}$ of
possible treatments and we use the convention that $Y_i=Y_i(-1)=0$ when we set $W_{n,i}=-1$.

\begin{remark}
Our setup allows for experimental designs that use information on baseline
covariates in essentially arbitrary ways.
Designs
involving stratified randomization on covariates and, in particular, matched
pairs, are allowed.
Our setup also includes designs that use outcomes from a pilot study, by
defining observations $1,\ldots,n_{\text{pilot}}$ as observations from this
study.
Note that treating the randomization device $U$ as a random variable of fixed
dimension does not lead to a loss of generality, since transformations of $U$
can be incorporated into the sampling rule $w_{n,i}(X^{(n)},Y_n^{(i-1)},U)$.
\end{remark}

\begin{remark}\label{finite_population_remark}
  We follow much of the literature by assuming that our sample is taken
  independently from an infinite population.
  In particular, this assumption is made in papers deriving asymptotics for
  estimators and tests under stratified sampling including
  \citet{bugni_inference_2018} and \citet{bai_inference_2021}, and papers on
  experimental design including \citet{imbens_finite_2009},
  \citet{hahn_adaptive_2011}, \citet{tabord-meehan_stratification_2023} and
  \citet{cytrynbaum_optimal_2023}.
  One can consider this an
  approximation to a setting where one samples from a large population of $N$
  units.  Formally, each unit $j=1,\ldots, N$ has covariates and outcomes
  $X_j^*,\{Y_j^*(w)\}_{w\in\mathcal{W}}$, and we 
  we draw $X_i,\{Y_i(w)\}_{w\in\mathcal{W}}$ by drawing a random variable $j(i)$
  over the uniform distribution on $1,\ldots, N$, and then defining
  $X_i=X_{j(i)}^*$ and $Y_i(w)=Y_{j(i)}^*(w)$ for each $w\in\mathcal{W}$.  This
  corresponds exactly to sampling from the larger population with replacement, which is
  a good approximation to sampling without replacement when $N$ is large
  relative to $n$.

  Thus, our setup incorporates an assumption that the experimental
  design involves randomized sampling from a large population.\footnote{This also means
  that treatment assignments that assign units to treatment groups
  deterministically as a function of the index $i$ or covariates $X_i$ are still
  ``randomized'' in the sense that the subset of units in each treatment group
  is random as a subset of the larger population.
  For example, the assignment that takes $W_{n,i}=0$ for
  $i=1,\ldots,n/2$ and $W_{n,i}=1$ for $i=n/2+1,\ldots,n$ is
  ``randomized'' in the sense that the sample of treated units $\{j(i):
  i=n/2+1,\ldots,n\}$ is a random subset of the population $1,\ldots,N$, as well
  as being a random subset of the sampled units (it is not a deterministic
  function of the set $\{j(i): i=1,\ldots,n\}$ of sampled units).}
  Results that explicitly address the
  question of whether it is indeed optimal to randomly sample from a (possibly
  large) finite  population include \citet[][Ch. 14, Section
  8]{savage_foundations_1972} and \citet[][Section
  8.7]{blackwell_theory_1954}.\footnote{The notion of  ``optimality'' is slightly different in these references, since they
    consider finite-sample minimax over a fixed set of distributions, in contrast to
    the semiparametric results in the present paper which correspond to
    asymptotic minimax bounds over a localized parameter space.}
  We note that our results do allow for some statements about the optimal use of
  covariates for sampling a single outcome (by taking the set of treatments to be a
  singleton and incorporating cost constraints; see Section
  \ref{survey_sampling_example_sec}).

\end{remark}

\subsection{Parametric Submodel and Likelihood Ratio}\label{submodel_subsection}

We consider a finite dimensional parametric model indexed by $\theta$.  We are
interested in efficiency bounds at a particular $\theta^*$.
While our analysis
will allow us to consider parametric settings, we will be primarily interested
in using least favorable submodels to derive semiparametric efficiency bounds in
infinite dimensional settings, as in the ATE bound for binary treatment
described in Section \ref{informal_description_sec}.
In cases where ambiguity may arise, we subscript expectations $E_\theta$ and
probability statements $P_\theta$ by $\theta$ to indicate that
$X_i,\{Y_i(w)\}_{w\in\mathcal{W}}$ are drawn from this model.

Let 
$f_X(x;\theta)$ denote the density of $X_i$ with respect to
$\nu_X$, and let
$f_{Y(w)|X}(y|x;\theta)$ denote the density of $Y_i(w)$ with respect to $\nu_{Y,w}$,
where $\nu_X$ and $\nu_{Y,w}$ are measures that
do not depend on $\theta$.
Let $p_U$ denote the density of $U$ (which does not depend on $\theta$).
The probability density of $U,X_1,\ldots,X_n,Y_{n,1},\ldots,Y_{n,n}$ is
\begin{align}\label{likelihood_eq}
p_U(u)\prod_{i=1}^n\left[ f_X(x_i;\theta) \prod_{w\in\mathcal{W}} f_{Y(w)|X}(y_i|x_i;\theta)^{I(w_{n,i}=w)}
  \right] 
\end{align}
where $w_{n,i}=w_{n,i}(x_1,\ldots,x_n,y_1,\ldots,y_{i-1},u)$.\footnote{The
  likelihood can be derived recursively by noting that, for any
  $w\in\mathcal{W}$ and any $X^{(n)},Y_n^{(i-1)},U$ such that
  $w_{n,i}(X^{(n)},Y_n^{(i-1)},U)=w$, we have $Y_i=Y_i(w)$ and
  $Y_i(w)|W_{n,i}=w,X^{(n)},Y_n^{(i-1)},U
  \stackrel{d}{=}Y_i(w)|X_i$
  by the independence assumptions.}
The researcher makes a decision using the observed data
$X_1,\ldots,X_n,Y_{1,n},\ldots,Y_{n,n}$, along with the treatment rule and the variable
$U$, which determine the treatment assignments $W_{i,n}$.
Since the treatment rule is known once $U$ is given, we can take the observed
data to be $X_1,\ldots,X_n,Y_{1,n},\ldots,Y_{n,n}$ and $U$, so that the
likelihood is given by (\ref{likelihood_eq}).

Following the literature on asymptotic efficiency, we make a quadratic mean
differentiability assumption on the model \citep[see][Section 7.2, for a definition]{van_der_vaart_asymptotic_1998}.

\begin{assumption}\label{qmd_assump}
The family $f_X(x;\theta)$ is differentiable in quadratic mean (qmd) at
$\theta^*$ with score function $s_X(X_i)$, and, for each $w\in\mathcal{W}$, the family
$f_{Y(w)|X}(y|x;\theta)$ is qmd at $\theta^*$ with score function $s_w(Y_i(w)|X_i)$.
\end{assumption}
Here, the qmd condition for the conditional distribution
$f_{Y(w)|X}(y|x,\theta)$ is taken to mean that the family is qmd when $X_i$ is
distributed according to $\theta^*$; i.e. the family $\theta\mapsto f_X(x;\theta^*)f_{Y(w)|X}(y|x;\theta)$ is qmd at $\theta^*$.
Let $I_X=E_{\theta^*}s_X(X_i)s_X(X_i)'$ denote the information for $X_i$,
and let
$I_{Y(w)|X}(x)=E_{\theta^*}[s_{w}(Y_i(w)|X_i)s_{w}(Y_i(w)|X_i)'|X_i=x]$
and
$I_{Y(w)}=E_{\theta^*}I_{Y(w)|X}(X_i)=E_{\theta^*}[s_{w}(Y_i(w)|X_i)s_{w}(Y_i(w)|X_i)']$
denote the conditional and unconditional information for $Y_i(w)$ for each $w$.
Note that these are finite by Theorem 7.2 in \citet{van_der_vaart_asymptotic_1998}.

\subsection{Likelihood Expansion and Local Asymptotic Normality}\label{lan_subsection}

Consider a sequence $\theta_n=\theta^*+h/\sqrt{n}$ where 
$\theta^*$ is given.
To obtain efficiency bounds, we extend Le Cam's result on the
asymptotics of likelihood ratio statistics in parametric families
(Theorem 7.2 in \citet{van_der_vaart_asymptotic_1998})
to our setting, with the likelihood given in (\ref{likelihood_eq}).
Since $p_{U}$
does not depend on $\theta$, this term drops out, and the log of the likelihood
ratio for $\theta^*$ vs $\theta_n$ is given by
\begin{align*}
  \ell_{n,h}=\sum_{i=1}^n \tilde\ell_X(X_i;\theta_{n})
  + \sum_{w\in\mathcal{W}}\sum_{i=1}^n I(W_{n,i}=w)\tilde\ell_{Y(w)|X}(Y_i,X_i;\theta_{n})
\end{align*}
where
\begin{align*}
  &\tilde \ell_X(x;\theta)\equiv \log \frac{f_{X}(x;\theta)}{f_{X}(x;\theta^*)},
  \quad
  \tilde \ell_{Y(w)|X}(y,x;\theta)\equiv \log \frac{f_{{Y(w)|X}}(y;x,\theta)}{f_{{Y(w)|X}}(y;x,\theta^*)}, \, w\in\mathcal{W}.
\end{align*}

\begin{theorem}\label{lr_expansion_thm}
  Under Assumption \ref{qmd_assump}, the likelihood ratio $\ell_{n,h}$ satisfies
  \begin{align}\label{lr_expansion_eq}
  \ell_{n,h}
  &=\frac{1}{\sqrt{n}}\sum_{i=1}^n h's_X(X_i)
  + \frac{1}{\sqrt{n}}\sum_{i=1}^n\sum_{w\in\mathcal{W}}I(W_{n,i}=w)h's_{w}(Y_i(w)|X_i)  \nonumber  \\
  &- \frac{1}{2}h'I_X h
  - \frac{1}{2n}\sum_{i=1}^n \sum_{w\in\mathcal{W}}I(W_{n,i}=w)h'I_{Y(w)|X}(X_i)h
    + o_{P_{\theta^*}}(1).
  \end{align}
\end{theorem}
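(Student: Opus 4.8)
The plan is to expand each of the two sums in $\ell_{n,h}$ separately using the quadratic mean differentiability (qmd) assumption, treating the covariate term as a standard iid Le Cam expansion and the outcome term as a sum of martingale-like increments adapted to the sequential filtration generated by the treatment assignment rule. The key idea is that although the treatment indicators $I(W_{n,i}=w)$ introduce dependence, at each stage $i$ the indicator is determined by $X^{(n)}$ and the past outcomes $Y_n^{(i-1)}$, whereas the current outcome $Y_i(w)$ that enters the score is conditionally independent of this information given $X_i$. This conditional independence is what lets me invoke the qmd structure term-by-term even in the non-iid design.

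First I would handle the covariate sum $\sum_{i=1}^n \tilde\ell_X(X_i;\theta_n)$. Since the $X_i$ are iid and drawn according to $\theta^*$, and $f_X$ is qmd at $\theta^*$ by Assumption \ref{qmd_assump}, this is exactly the setting of Theorem 7.2 in \citet{van_der_vaart_asymptotic_1998}, which gives
\begin{align*}
  \sum_{i=1}^n \tilde\ell_X(X_i;\theta_n)
  = \frac{1}{\sqrt{n}}\sum_{i=1}^n h's_X(X_i) - \frac{1}{2}h'I_X h + o_{P_{\theta^*}}(1).
\end{align*}
The substantive work is the outcome sum. For this I would fix $w$ and write $\tilde\ell_{Y(w)|X}(Y_i,X_i;\theta_n)$ using the qmd expansion of the conditional family. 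Writing $g_{n,i}=\sqrt{f_{Y(w)|X}(Y_i|X_i;\theta_n)/f_{Y(w)|X}(Y_i|X_i;\theta^*)}-1$, qmd gives the pointwise-in-$L^2$ approximation of $g_{n,i}$ by $\tfrac{1}{2\sqrt{n}}h's_w(Y_i(w)|X_i)$, and the standard identity $\tilde\ell = 2\log(1+g_{n,i}) = 2g_{n,i} - g_{n,i}^2 + r_{n,i}$ reduces everything to controlling the linear term, the quadratic term, and the remainder, each multiplied by $I(W_{n,i}=w)$.

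The main obstacle — and the place where the martingale structure does the real work — is showing that (i) the weighted linear term $\tfrac{1}{\sqrt n}\sum_i I(W_{n,i}=w)\,h's_w(Y_i(w)|X_i)$ is exactly the object appearing in the statement with no bias correction needed, and (ii) the weighted quadratic term concentrates on $\tfrac{1}{2n}\sum_i I(W_{n,i}=w) h'I_{Y(w)|X}(X_i)h$. The crux is the following conditioning argument: let $\mathcal{F}_{n,i-1}$ be the $\sigma$-field generated by $X^{(n)}$, $U$, and $Y_n^{(i-1)}$. Then $I(W_{n,i}=w)$ is $\mathcal{F}_{n,i-1}$-measurable, while $Y_i(w)$ is independent of $\mathcal{F}_{n,i-1}$ conditional on $X_i$ (the outcome drawn for unit $i$ under hypothetical treatment $w$ depends only on $X_i$). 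Consequently $E_{\theta^*}[s_w(Y_i(w)|X_i)\mid \mathcal{F}_{n,i-1}, X_i] = E_{\theta^*}[s_w(Y_i(w)|X_i)\mid X_i] = 0$ by the score mean-zero property under qmd, so the weighted linear term is a sum of martingale differences and the quadratic term is handled by a conditional law-of-large-numbers / martingale argument giving convergence to its conditional mean $I(W_{n,i}=w) I_{Y(w)|X}(X_i)$. The remainder terms $\sum_i I(W_{n,i}=w) r_{n,i} = o_{P_{\theta^*}}(1)$ are controlled by bounding $0 \le I(W_{n,i}=w) \le 1$ and applying the same conditioning to reduce to the unweighted qmd remainder bounds, since the indicators only ever reduce the magnitude of each summand. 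Putting the covariate expansion together with the sum over $w$ of the outcome expansions yields \eqref{lr_expansion_eq}.
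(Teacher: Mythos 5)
Your overall architecture (iid Le Cam expansion for the covariate sum; a martingale argument over the filtration generated by $X^{(n)}$, $U$, and past outcomes for the outcome sum; indicators bounded by one to reduce the remainder to the unweighted iid case) matches the paper's proof. But there is a genuine gap in your bookkeeping of the linear and quadratic terms, and it sits exactly where the real work lies. Writing $g_{n,i}=\sqrt{f_{Y(w)|X}(Y_i(w)|X_i;\theta_n)/f_{Y(w)|X}(Y_i(w)|X_i;\theta^*)}-1$, your claim (i) --- that $\sum_i I(W_{n,i}=w)\,2g_{n,i}$ equals $n^{-1/2}\sum_i I(W_{n,i}=w)h's_w(Y_i(w)|X_i)$ up to $o_{P_{\theta^*}}(1)$ ``with no bias correction needed'' --- is false. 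The score $s_w$ is indeed conditionally mean zero, but $2g_{n,i}$ is not: since the densities integrate to one,
\begin{align*}
E_{\theta^*}[2g_{n,i}\mid X_i]=-E_{\theta^*}[g_{n,i}^2\mid X_i]\approx -\tfrac{1}{4n}h'I_{Y(w)|X}(X_i)h,
\end{align*}
so the linear term carries a systematic negative bias of order $1/n$ per observation, which sums to the nonvanishing contribution $-\tfrac{1}{4n}\sum_i I(W_{n,i}=w)h'I_{Y(w)|X}(X_i)h$. The per-term qmd bound $E_{\theta^*}[(2g_{n,i}-n^{-1/2}h's_w)^2]=o(1/n)$ does not rescue you: it controls the martingale (mean-zero) part of the sum, but $n$ biases each of order $1/n$ are of order one in total. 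Your claim (ii) contains the offsetting error: the quadratic term $\sum_i I(W_{n,i}=w)g_{n,i}^2$ concentrates on $\tfrac{1}{4n}\sum_i I(W_{n,i}=w)h'I_{Y(w)|X}(X_i)h$, not $\tfrac{1}{2n}\sum_i(\cdots)$, since $g_{n,i}^2\approx\tfrac{1}{4n}(h's_w)^2$. The two mistakes cancel in the final display, which is why your total agrees with the theorem, but neither intermediate claim is provable as stated, so the proof does not go through.

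The paper's proof is your proof with the correct accounting made explicit. Setting $V_{n,i}=2g_{n,i}$, it decomposes the linear term into the martingale part $V_{n,i}-E_{\theta^*}[V_{n,i}\mid X_i]-n^{-1/2}h's_w$ (handled by the cross-term/second-moment argument you sketch), the score part $n^{-1/2}h's_w$, and the bias part $E_{\theta^*}[V_{n,i}\mid X_i]$, and then uses the exact identity $E_{\theta^*}[V_{n,i}\mid X_i]=-\tfrac14 E_{\theta^*}[V_{n,i}^2\mid X_i]$ to show the bias contributes $-\tfrac{1}{4n}\sum_i I(W_{n,i}=w)h'I_{Y(w)|X}(X_i)h+o_{P_{\theta^*}}(1)$. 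Adding the genuinely quadratic term $-\tfrac14\sum_i I(W_{n,i}=w)V_{n,i}^2$, which concentrates on another $-\tfrac{1}{4n}\sum_i(\cdots)$ via an $L^1$ comparison of $nV_{n,i}^2$ with $(h's_w)^2$ followed by a martingale law of large numbers, produces the coefficient $\tfrac12$ in (\ref{lr_expansion_eq}). To repair your proposal you must insert this bias step; the conditioning structure and remainder control you describe can stay as written.
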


Theorem \ref{lr_expansion_thm} can be used to prove the following local asymptotic normality result.

\begin{corollary}\label{lr_clt_corollary}
Suppose Assumption \ref{qmd_assump}, holds and let
$\tilde I_n=I_X+\frac{1}{n}\sum_{i=1}^n \sum_{w\in\mathcal{W}}I(W_{n,i}=w)I_{Y(w)|X}(X_i)$.
Let $\tilde I^*$ be a positive definite symmetric matrix.
\begin{itemize}
\item[i.)] If $\tilde I_n$ converges in probability to $\tilde I^*$ under $\theta^*$, then
$\ell_{n,h}$ converges in distribution to a $N(-h'\tilde I^* h/2,h'\tilde I^*
h)$ law under $\theta^*$.
\item[ii.)] If $\tilde I_n\le \tilde I^*+o_{P_\theta^*}(1)$ (where inequality is in the
positive definite sense), then one can define
a probability space under each
$\theta$ with an additional random variable $Z^{(n)}$ (and with the marginal
distribution of $U,X^{(n)},Y_n^{(n)}$ under $\theta$ unchanged) such that
$\tilde \ell_{n,h}=\log \frac{dP_{\theta^*+h/\sqrt{n}}}{dP_{\theta^*}}(U,X^{(n)},Y_n^{(n)},Z^{(n)})$ converges in
distribution to a $N(-h'\tilde I^* h/2,h'\tilde I^*h)$ law under $\theta^*$.
\end{itemize}
\end{corollary}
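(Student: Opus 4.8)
The plan is to separate the likelihood ratio into its linear and quadratic parts and establish a central limit theorem for the linear part. Writing
$\Delta_{n,h}=\frac{1}{\sqrt n}\sum_{i=1}^n h's_X(X_i)+\frac{1}{\sqrt n}\sum_{i=1}^n\sum_{w\in\mathcal{W}}I(W_{n,i}=w)h's_w(Y_i(w)|X_i)$,
Theorem \ref{lr_expansion_thm} gives $\ell_{n,h}=\Delta_{n,h}-\tfrac12 h'\tilde I_n h+o_{P_{\theta^*}}(1)$. For the first claim, once I show $\Delta_{n,h}\xrightarrow{d}N(0,h'\tilde I^* h)$, the convergence $\tfrac12 h'\tilde I_n h\xrightarrow{p}\tfrac12 h'\tilde I^* h$ together with Slutsky's lemma delivers the stated $N(-h'\tilde I^* h/2,\,h'\tilde I^* h)$ limit, so the whole problem reduces to a CLT for $\Delta_{n,h}$.

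To obtain this CLT I would exhibit $\Delta_{n,h}$ as the terminal value of a single martingale, via a two--phase revelation of the data. Define $\mathcal{H}_{n,k}=\sigma(U,X_1,\dots,X_k)$ for $0\le k\le n$ and $\mathcal{H}_{n,n+i}=\mathcal{F}_{2,n,i+1}$ for $0\le i\le n$, and set $D_{n,k}=\frac{1}{\sqrt n}h's_X(X_k)$ for $k\le n$ and $D_{n,n+i}=\frac{1}{\sqrt n}\sum_{w}I(W_{n,i}=w)h's_w(Y_i(w)|X_i)$ for $1\le i\le n$, so that $\sum_{k=1}^{2n}D_{n,k}=\Delta_{n,h}$. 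The phase--one increments have conditional mean zero since $X_k$ is independent of $\mathcal{H}_{n,k-1}$ and $s_X$ is a score; the phase--two increments have conditional mean zero since $I(W_{n,i}=w)$ is $\mathcal{F}_{2,n,i}$--measurable while $E_{\theta^*}[s_w(Y_i(w)|X_i)\mid\mathcal{F}_{2,n,i}]=E_{\theta^*}[s_w(Y_i(w)|X_i)\mid X_i]=0$ (this is exactly where one uses that the design sees all of $X^{(n)}$ but only past outcomes, so that $W_{n,i}$ is predictable). Thus $\{D_{n,k}\}$ is a martingale difference array with predictable variation $\sum_{k}E_{\theta^*}[D_{n,k}^2\mid\mathcal{H}_{n,k-1}]=h'I_X h+\frac1n\sum_i\sum_w I(W_{n,i}=w)h'I_{Y(w)|X}(X_i)h=h'\tilde I_n h\xrightarrow{p}h'\tilde I^* h$. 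The conditional Lindeberg condition follows phase by phase: the iid phase--one terms give $E_{\theta^*}[(h's_X(X_1))^2 I(|h's_X(X_1)|>\epsilon\sqrt n)]\to0$ by dominated convergence, and the phase--two Lindeberg sum is nonnegative with expectation bounded by $\sum_w E_{\theta^*}[(h's_w(Y(w)|X))^2 I(|h's_w(Y(w)|X)|>\epsilon\sqrt n)]\to 0$, hence vanishes in probability by Markov's inequality. The martingale CLT then yields $\Delta_{n,h}\xrightarrow{d}N(0,h'\tilde I^* h)$, proving the first claim.

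For the second claim $\tilde I_n$ need no longer converge, so $\Delta_{n,h}$ on its own need not be asymptotically normal; the role of $Z^{(n)}$ is to top up the fluctuating information $V_n:=h'\tilde I_n h$ to the constant $h'\tilde I^* h$. I would set $c_n=\max\{h'(\tilde I^*-\tilde I_n)h,\,0\}$, which is bounded (since $0\le h'\tilde I_n h\le h'\tilde I^* h+o_{P_{\theta^*}}(1)$) and, by the hypothesis $\tilde I_n\le \tilde I^*+o_{P_{\theta^*}}(1)$, satisfies $V_n+c_n=h'\tilde I^* h+o_{P_{\theta^*}}(1)$. On an enlarged space let $\xi\sim N(0,1)$ be independent of the data, leave the marginal law of $(U,X^{(n)},Y_n^{(n)})$ unchanged, and let the conditional law of $Z^{(n)}$ given the data be $N(0,1)$ under $\theta^*$ and $N(\sqrt{c_n},1)$ under $\theta^*+h/\sqrt n$. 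Then the added log--likelihood ratio is $\ell^Z_{n,h}=\sqrt{c_n}\,\xi-\tfrac12 c_n$, the joint density factors so that $\tilde\ell_{n,h}=\ell_{n,h}+\ell^Z_{n,h}=\Delta_{n,h}+\sqrt{c_n}\,\xi-\tfrac12 h'\tilde I^* h+o_{P_{\theta^*}}(1)$, and it remains to show $\Delta_{n,h}+\sqrt{c_n}\,\xi\xrightarrow{d}N(0,h'\tilde I^* h)$. Conditioning on the data and using independence of $\xi$, the characteristic function equals $E_{\theta^*}[\exp(it\Delta_{n,h})\exp(-\tfrac12 t^2 c_n)]=\exp(-\tfrac12 t^2 h'\tilde I^* h)\,E_{\theta^*}[\exp(it\Delta_{n,h}+\tfrac12 t^2 V_n)]+o(1)$, using $V_n+c_n=h'\tilde I^* h+o_{P_{\theta^*}}(1)$ and boundedness of $V_n$ and $c_n$. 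The statement then follows from the martingale fact $E_{\theta^*}[\exp(it\Delta_{n,h}+\tfrac12 t^2 V_n)]\to1$, which holds for a martingale difference array satisfying the conditional Lindeberg condition even when its predictable variation does not converge (McLeish's lemma; see Hall and Heyde, 1980).

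I expect the main obstacle to be the second claim, specifically the case in which $\tilde I_n$ does not converge. Here the ordinary martingale CLT, which requires a deterministic limit of the predictable variation, no longer applies to $\Delta_{n,h}$ alone, and one must pass through the stable--convergence refinement $E[\exp(itM_n+\tfrac12 t^2\langle M\rangle_n)]\to1$ used above (or extract subsequences along which $\tilde I_n$ converges). A secondary point needing care is checking that the augmented log--likelihood ratio factors as $\ell_{n,h}+\ell^Z_{n,h}$ — which uses that $Z^{(n)}$ enters only through a conditional law given the original data, leaving that marginal unchanged — and that the positive-semidefinite truncation defining $c_n$ costs only an $o_{P_{\theta^*}}(1)$ term, which is precisely what the hypothesis $\tilde I_n\le\tilde I^*+o_{P_{\theta^*}}(1)$ guarantees.
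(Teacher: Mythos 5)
Your proof of the first claim is essentially the paper's own argument: the same two\mbox{-}phase martingale difference array (covariate scores first, then treatment\mbox{-}indicator\mbox{-}weighted outcome scores, using that $W_{n,i}$ is measurable with respect to $\mathcal{F}_{2,n,i}$), the same predictable variation $h'\tilde I_n h$, the same domination\mbox{-}by\mbox{-}iid Lindeberg check, and then the martingale CLT combined with Theorem \ref{lr_expansion_thm} and Slutsky.

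For the second claim you take a genuinely different route, and as written it has a gap. The paper picks positive semidefinite $\Sigma_n$ with $\tilde I_n+\Sigma_n=\tilde I^*+o_{P_{\theta^*}}(1)$ and, given the data, draws $Z_1,\ldots,Z_n$ iid $N(\Sigma_n^{1/2}(\theta-\theta^*),I)$, so the augmented log likelihood ratio adds the $n$ increments $Z_i'\Sigma_n^{1/2}h/\sqrt{n}$ to the \emph{same} martingale array; the extended array's predictable variation converges to the constant $h'\tilde I^*h$, and the same standard martingale CLT as in the first claim finishes the proof. Your single compensator $\sqrt{c_n}\,\xi$ cannot be handled that way---it is one non\mbox{-}negligible increment, so the Lindeberg condition fails for it---which is exactly why you must appeal to the McLeish\mbox{-}type lemma $E[\exp(itM_n+\tfrac12 t^2\langle M\rangle_n)]\to 1$. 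That lemma is the right kind of tool, but your verification of its hypotheses fails where you invoke ``boundedness of $V_n$'': $V_n=h'\tilde I_n h$ contains $\frac1n\sum_i\sum_w I(W_{n,i}=w)\,h'I_{Y(w)|X}(X_i)h$, and $I_{Y(w)|X}(\cdot)$ need not be a bounded function, so the hypothesis yields only the in\mbox{-}probability bound $V_n\le h'\tilde I^*h+o_{P_{\theta^*}}(1)$, not an almost sure one. This breaks two steps. First, replacing $\exp(-\tfrac12 t^2 c_n)$ by $\exp(-\tfrac12 t^2(h'\tilde I^*h-V_n))$ costs $E_{\theta^*}[\exp(\tfrac12 t^2(V_n-h'\tilde I^*h)_+)-1]$, which need not vanish merely because $(V_n-h'\tilde I^*h)_+=o_{P_{\theta^*}}(1)$ (take this positive part equal to $n$ with probability $1/n$); this step is rescuable via the exact identity $c_n=h'\tilde I^*h-\min\{V_n,\,h'\tilde I^*h\}$, whose truncated variation is bounded. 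Second, and more seriously, the McLeish/Hall--Heyde argument needs an almost surely bounded quadratic variation (to get uniform integrability of $\prod_k(1+itD_{n,k})$) and is naturally stated for the realized rather than the predictable variation; making your appeal rigorous requires stopping the martingale at the predictable time its variation first exceeds $h'\tilde I^*h+1$, plus a realized\mbox{-}versus\mbox{-}predictable comparison under conditional Lindeberg, none of which appears in your proposal. These repairs are standard but not free, and the paper's device of splitting the compensating noise into $n$ iid pieces avoids them entirely by reducing both claims to one elementary CLT. A secondary difference: the paper's $\Sigma_n$ does not depend on $h$, so it yields a single augmented experiment with a common central sequence across $h$ (genuine local asymptotic normality, which the later minimax corollaries use), whereas your compensator's mean $\sqrt{c_n}$ is not linear in $h$ and so delivers only the fixed\mbox{-}$h$ distributional statement unless modified accordingly.
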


According to Corollary \ref{lr_clt_corollary}, the model indexed by $\theta^*+h/\sqrt{n}$ is
locally asymptotically normal in the sense of Definition 7.14 in  \citet{van_der_vaart_asymptotic_1998}.  Therefore, the risk of any decision
is bounded from below asymptotically by the risk from a decision in the limiting
model, in which a $N(h,\tilde I^*)$ random variable is observed.
Note that part (ii) of Corollary \ref{lr_clt_corollary} involves augmenting the
data by additional random variables $Z_i$ to handle the case where $\tilde I_n$
may not converge in probability.
This step is a technical trick that appears to be needed to cover, for
example, treatment rules that do not assign any treatment to some individuals,
which is relevant in the setting in Section \ref{general_te_section} with cost
constraints. 
The bounds obtained from local asymptotic normality still apply to
the original setting in which the variables $Z_i$ are not observed, since the bound
from the $N(h,\tilde I^*)$ model applies to decisions that do not use the
variables $Z_i$.

\begin{remark}
While the focus of this paper is on obtaining bounds for experiments that
optimize both the treatment assignment rule and estimator, we note that
Theorem \ref{lr_expansion_thm} and Corollary \ref{lr_clt_corollary} can also be
applied to the topic of efficient estimation under treatment assignment rules
that are not necessarily optimal.
In complementary work, \citet{bai_efficiency_2023} apply these results to derive
sharp asymptotic variance bounds for estimation under non-iid treatment
assignment rules that may not be fully optimized.
Such results are relevant when practical considerations make it difficult or
infeasible to implement a treatment assignment rule that is fully optimal.

\end{remark}

\section{Efficiency Bounds for Average Treatment Effect}\label{binary_ate_formal_sec}

We now apply these results to derive the asymptotic efficiency bound for estimation and
inference on the average treatment
effect (ATE) $E[Y_i(1)-Y_i(0)]$ in the case of a binary treatment
($\mathcal{W}=\{0,1\}$), as described in Section \ref{informal_description_sec}.
Given a population distribution, the variance bound (\ref{hahn_seb_var_eq})
corresponds to a least favorable one-dimensional submodel indexed by
$\theta\in\mathbb{R}$, with $\theta^*$ corresponding to the given population
distribution.  
Thus, we consider the variance bound $v_{e()}$ in (\ref{hahn_seb_var_eq}) with
$\mu(x,w)=\mu_{\theta^*}(x,w)=E_{\theta^*}[Y_i(w)|X_i=x]$ and
$\sigma^2(x,w)=\sigma_{\theta^*}^2(x,w)=var(Y_i(w)|X_i=x)$,
and we define the Neyman allocation $e^*(x)$ in (\ref{neyman_allocation_eq})
with $\sigma^2(x,w)=\sigma_{\theta^*}^2(x,w)=var(Y_i(w)|X_i=x)$.
We then consider a submodel through $\theta^*$ that corresponds to the least
favorable submodel used to derive this bound in the iid case.  Calculations in
\citet[][pp. 326-327]{hahn_role_1998} show that this submodel takes the form in
Section \ref{main_results_sec}, with
\begin{align}\label{ate_lf_submodel_eq}
&s_X(X_i)=\mu_{\theta^*}(X_i,1)-\mu_{\theta^*}(X_i,0)-E_{\theta^*}[\mu_{\theta^*}(X_i,1)-\mu_{\theta^*}(X_i,0)],  \nonumber  \\
&s_{0}(Y_i|X_i)=\frac{Y_i(0)-\mu_{\theta^*}(X_i,0)}{1-e(x_i)}
\quad\text{and}\quad
s_{1}(Y_i|X_i)=\frac{Y_i(1)-\mu_{\theta^*}(X_i,1)}{e(x_i)}.
\end{align}
The score function for this submodel is
\begin{align*}
s(X_i,Y_i(0),Y_i(1),W_i) = s_X(X_i) + (1-W_i)s_{0}(Y_i|X_i) + W_is_{1}(Y_i|X_i)
\end{align*}
and the information is $E_{\theta^*}s(X_i,Y_i(0),Y_i(1),W_i)^2=v_{e(\cdot)}$.
Furthermore, letting $ATE(\theta)=E_{\theta}[Y_i(1)-Y_i(0)]$ for $\theta$ in
this submodel the calculations in \citet[][pp. 326-327]{hahn_role_1998} show that $ATE(\theta)$ is differentiable at $\theta^*$ in the sense of 
p. 363 of \citet{van_der_vaart_asymptotic_1998}, and that
$s(X_i,Y_i(0),Y_i(1),W_i)$ is the efficient influence function, so that
\begin{align}\label{ATE_derivative_eq}
  ATE(\theta^*+t)-ATE(\theta^*)=t E_{\theta^*}s(X_i,Y_i(0),Y_i(1),W_i)^2 + o(t)
  =t v_{e(\cdot)} + o(t)
\end{align}
as $t\to 0$.
These calculations require regularity conditions on the submodel so that certain
derivatives can be taken under integrals.  Rather than stating these as
primitive conditions, we will assume (\ref{ATE_derivative_eq}) directly.

We now apply Theorem \ref{lr_expansion_thm} to show that no further improvement
is possible relative to
the semiparametric efficiency bound $v_{e^*()}$, with propensity score
given by the Neyman allocation $e^*()$.  We begin with a local asymptotic
normality theorem.

\begin{theorem}\label{ATE_LAN_thm}
In the binary treatment setting with $\mathcal{W}=\{0,1\}$,
consider
a model satisfying Assumption \ref{qmd_assump}, with $s_X$, $s_{0}$ and
$s_{1}$ given by the score (\ref{ate_lf_submodel_eq}) for the least
favorable submodel with $e(\cdot)$ given by the Neyman allocation
(\ref{neyman_allocation_eq}).  Let 
$w_{n,i}(X^{(n)},Y_{n}^{(i-1)},U)$
be any sequence of treatment rules.
Then the sequence of experiments $P_{\theta^*+h/\sqrt{n}}$ is locally
asymptotically normal \citep[as defined in Definition 7.14, p. 104
of][]{van_der_vaart_asymptotic_1998} with information $v_{e^*(\cdot)}$:
$\ell_{n,h}$ converges in distribution to a $N(-h^2v_{e^*(\cdot)}/2,h^2
v_{e^*(\cdot)})$ law under $\theta^*$.
\end{theorem}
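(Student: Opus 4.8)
The plan is to read off this LAN statement directly from Corollary \ref{lr_clt_corollary}, so that the entire argument reduces to identifying the limit of the random information sequence $\tilde I_n = I_X + \frac{1}{n}\sum_{i=1}^n \sum_{w\in\mathcal{W}} I(W_{n,i}=w) I_{Y(w)|X}(X_i)$ and showing it equals $v_{e^*(\cdot)}$. Since the submodel is one-dimensional, every information here is a scalar, and the whole point will be to exploit the defining property of the Neyman allocation.

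First I would evaluate each piece of $\tilde I_n$ under the least favorable scores in (\ref{ate_lf_submodel_eq}) with $e(\cdot)=e^*(\cdot)$. A direct computation gives $I_X = var_{\theta^*}(\mu_{\theta^*}(X_i,1)-\mu_{\theta^*}(X_i,0))$ and, conditioning on $X_i=x$, $I_{Y(0)|X}(x)=\sigma_{\theta^*}^2(x,0)/[1-e^*(x)]^2$ and $I_{Y(1)|X}(x)=\sigma_{\theta^*}^2(x,1)/e^*(x)^2$. The crucial observation is that the first-order condition defining the Neyman allocation (\ref{neyman_allocation_eq}) makes these two conditional informations coincide: setting $g(x)\equiv \sigma_{\theta^*}^2(x,0)/[1-e^*(x)]^2 = \sigma_{\theta^*}^2(x,1)/e^*(x)^2$, the contribution of unit $i$ to the averaged term equals $g(X_i)$ no matter which arm it receives. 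Hence, as long as each unit is assigned a treatment (so that $\sum_{w\in\{0,1\}} I(W_{n,i}=w)=1$), the random information collapses to $\tilde I_n = I_X + \frac{1}{n}\sum_{i=1}^n g(X_i)$, a covariate average that no longer depends on the treatment rule $w_{n,i}$ at all.

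It then remains to evaluate this limit. Because $g(X_1),\ldots,g(X_n)$ are iid with finite mean (finiteness of the informations follows from Assumption \ref{qmd_assump} via Theorem 7.2 of \citet{van_der_vaart_asymptotic_1998}), the weak law of large numbers gives $\frac{1}{n}\sum_{i=1}^n g(X_i)\to E_{\theta^*}g(X_i)$ in $P_{\theta^*}$-probability; note this is an ordinary iid average, so the martingale machinery of Theorem \ref{lr_expansion_thm} is not needed here. To identify this with $v_{e^*(\cdot)}$ I would verify the pointwise identity forced by the Neyman condition, namely $\sigma_{\theta^*}^2(x,0)/[1-e^*(x)]+\sigma_{\theta^*}^2(x,1)/e^*(x)=g(x)[1-e^*(x)]+g(x)e^*(x)=g(x)$; taking expectations and adding $I_X$ shows $I_X+E_{\theta^*}g(X_i)=v_{e^*(\cdot)}$ exactly, by comparison with (\ref{hahn_seb_var_eq}). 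Therefore $\tilde I_n\to v_{e^*(\cdot)}$ in probability, and the first part of Corollary \ref{lr_clt_corollary} (with $\tilde I^*=v_{e^*(\cdot)}>0$) yields convergence of $\ell_{n,h}$ to the $N(-h^2 v_{e^*(\cdot)}/2,\,h^2 v_{e^*(\cdot)})$ law.

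The conceptual heart of the argument, and the step I expect to carry all the content, is the equalization in the second paragraph: because the Neyman allocation is precisely the propensity score that balances conditional information across the two arms, $\tilde I_n$ becomes a functional of the covariate sample alone and is invariant to how treatment is actually assigned. This is exactly what lets a single limit $v_{e^*(\cdot)}$ serve as the bound for every design, and the remaining work is bookkeeping, chiefly checking that the $1/[1-e^*]^2$ weights of the information reconcile, through the Neyman identity, with the $1/[1-e^*]$ weights of the Hahn variance in (\ref{hahn_seb_var_eq}). If one wishes to permit rules that leave some units unassigned, then $\sum_{w} I(W_{n,i}=w)\le 1$ yields only $\tilde I_n\le I_X+\frac{1}{n}\sum_i g(X_i)=v_{e^*(\cdot)}+o_{P_{\theta^*}}(1)$, and one instead invokes the second, augmented part of Corollary \ref{lr_clt_corollary} to obtain LAN with information $v_{e^*(\cdot)}$.
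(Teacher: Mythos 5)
Your proposal is correct and follows essentially the same route as the paper's proof: compute the conditional informations under the least favorable scores, use the Neyman first-order condition to equalize them across arms so that $\tilde I_n$ collapses to a treatment-rule-free covariate average, identify its LLN limit with $v_{e^*(\cdot)}$, and invoke Corollary \ref{lr_clt_corollary}. Your explicit verification of the identity $I_X+E_{\theta^*}g(X_i)=v_{e^*(\cdot)}$ and the remark about unassigned units (requiring the augmented second part of the corollary) are welcome details that the paper leaves implicit.
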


A consequence of the local asymptotic normality result in Theorem \ref{ATE_LAN_thm} and the
differentiability of the ATE parameter in this submodel, as defined in
(\ref{ATE_derivative_eq}), is that the efficiency bound
$v_{e^*(\cdot)}$ gives a bound on the asymptotic performance of any procedure
under any sampling scheme.
We now state a local asymptotic minimax result,
which gives such a bound for estimators in this setting.  Other statements from
asymptotic efficiency theory in regular parametric and semiparametric models (as
in, e.g. Chapters 7, 8, 15 and 25 of \citet{van_der_vaart_asymptotic_1998})
follow as well, but we omit them in the interest of space.

\begin{corollary}\label{ate_asymptotic_minimax_corollary}
Suppose in addition that (\ref{ATE_derivative_eq}) holds.
Let
$\widehat{ATE}_n=\widehat{ATE}_n(X^{(n)},Y_{n}^{(n)},W^{(n)})$
be any sequence of estimators
computed under some sequence of
treatment rules $W_{n,i}=w_{n,i}(X^{(n)},Y_{n}^{(i-1)},U)$. 
For any
loss function $L$ that is subconvex \citep[as defined on p. 113
of][]{van_der_vaart_asymptotic_1998}, we have
\begin{align*}
  \sup_{A}\liminf_{n\to\infty}\sup_{h\in A}
  E_{\theta^*+h/\sqrt{n}} L(\sqrt{n}(\widehat{ATE}_n - ATE(\theta^*+h/\sqrt{n})))
  \ge E_{T\sim N(0,v_{e^*(\cdot)})} L(T)
\end{align*}
where the first supremum is over all finite sets in $\mathbb{R}$.
\end{corollary}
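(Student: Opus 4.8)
The plan is to deduce the bound from the general local asymptotic minimax theorem for locally asymptotically normal experiments \citep[Theorem 8.11 of][]{van_der_vaart_asymptotic_1998}. That theorem requires exactly two ingredients, both of which are already established: (i) the localized experiments $P_{\theta^*+h/\sqrt{n}}$ are LAN at $\theta^*$ with nonsingular information, and (ii) the target parameter is differentiable at $\theta^*$. Ingredient (i) is the content of Theorem \ref{ATE_LAN_thm}, which shows that $\ell_{n,h}$ converges in distribution to a $N(-h^2 v_{e^*(\cdot)}/2, h^2 v_{e^*(\cdot)})$ law, so that the sequence is LAN with scalar information $v_{e^*(\cdot)}$. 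Ingredient (ii) is supplied by the assumed differentiability (\ref{ATE_derivative_eq}), which yields derivative $\dot\psi=v_{e^*(\cdot)}$ for the map $\psi(\theta)=ATE(\theta)$ along the Neyman-allocation submodel.

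With these in hand, the general theorem gives, for any estimator sequence and any subconvex $L$,
\begin{align*}
  \sup_A \liminf_{n\to\infty} \sup_{h\in A}
  E_{\theta^*+h/\sqrt{n}} L(\sqrt{n}(\widehat{ATE}_n - ATE(\theta^*+h/\sqrt{n})))
  \ge \int L \, dN(0, \dot\psi^2 / v_{e^*(\cdot)}),
\end{align*}
where the limiting variance simplifies as $\dot\psi^2/v_{e^*(\cdot)} = v_{e^*(\cdot)}^2/v_{e^*(\cdot)} = v_{e^*(\cdot)}$, matching the right-hand side of the claimed inequality. To replace $\liminf$ with $\limsup$, I use $\limsup_n g_n \ge \liminf_n g_n$ with $g_n=\sup_{h\in A}(\cdots)$ for each fixed $A$: for any finite $A_0$ one has $\sup_A\limsup_n g_n(A)\ge\limsup_n g_n(A_0)\ge\liminf_n g_n(A_0)$, and taking the supremum over $A_0$ on the right shows $\sup_A\limsup_n \ge \sup_A\liminf_n$, which already dominates $\int L\,dN(0,v_{e^*(\cdot)})$. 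Thus the stated inequality follows.

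The one point requiring care — the conceptual heart of the argument — is that the local asymptotic minimax theorem is usually stated for iid or otherwise ``standard'' experiments, whereas here the data $(X^{(n)},Y_n^{(n)})$ carries a complicated dependence structure generated by the sequential, covariate- and outcome-dependent rule $w_{n,i}$. The resolution is that the proof of the theorem uses the experimental structure \emph{only} through the LAN expansion: LAN implies that the localized experiments converge in Le Cam's sense to the Gaussian shift experiment in which one observes a single $N(h, v_{e^*(\cdot)}^{-1})$ draw, and the lower bound is then obtained in that limit experiment via the asymptotic representation theorem together with Anderson's lemma (which is where subconvexity of $L$ enters). Nothing in this chain invokes independence of the data or restricts the estimator beyond measurability with respect to the observables, and $\widehat{ATE}_n$ is such a function since $W^{(n)}$ is determined by $(X^{(n)},Y_n^{(i-1)},U)$. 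Theorem \ref{ATE_LAN_thm} therefore supplies precisely the hypothesis the general theorem needs; note in particular that no augmenting variable $Z^{(n)}$ is required here, because the Neyman allocation equalizes $I_{Y(0)|X}$ and $I_{Y(1)|X}$ and hence forces $\tilde I_n$ to converge to $v_{e^*(\cdot)}$ exactly rather than only being bounded above by it.
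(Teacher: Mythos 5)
Your proposal is correct and takes essentially the same route as the paper: feed the LAN conclusion of Theorem \ref{ATE_LAN_thm} and the assumed differentiability (\ref{ATE_derivative_eq}) into a local asymptotic minimax theorem, with the same variance algebra ($\dot\psi^2/v_{e^*(\cdot)}=v_{e^*(\cdot)}^2/v_{e^*(\cdot)}=v_{e^*(\cdot)}$, equivalent to the paper's reparametrization $\tilde h = h\,v_{e^*(\cdot)}^{1/2}$ giving unit information and $\dot\kappa(t)=v_{e^*(\cdot)}^{1/2}t$). The only substantive difference is the reference: the paper cites Theorem 3.11.5 of \citet{van_der_vaart_weak_1996}, which is stated for general (non-iid) LAN sequences of experiments and so applies verbatim, whereas you cite the iid-stated Theorem 8.11 of \citet{van_der_vaart_asymptotic_1998} and must argue—correctly, but as an inspection of its proof rather than an application of its statement—that only the LAN expansion is used.
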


\begin{remark}
  Note that Theorem \ref{ATE_LAN_thm} also implies that, in the least favorable submodel, any treatment
  assignment rule leads to the same optimal variance.  To get some intuition for
  this, we can think of our setting as a game against nature in which the
  researcher chooses an assignment rule and a decision procedure, and nature
  chooses a submodel.  In this game, nature chooses a least favorable submodel,
  which makes the researcher indifferent between all treatment assignments, just
  as an opponent's optimal strategy makes a player indifferent between all
  pure strategies that have positive probability of being played in a mixed
  strategy equilibrium.
  To achieve this, the least favorable submodel sets the information
  $I_{Y(w)|X}(X_i)$ to be equal across the treatment groups $w=0,1$ (note that Theorem
  \ref{ATE_LAN_thm} considers the case where $\mathcal{W}=\{0,1\}$, which rules
  out the possibility of assigning $W_{n,i}=-1$ and excluding unit $i$ from
  either treatment group).
  This leads to the information matrix $\tilde I_n$ defined in Corollary
  \ref{lr_clt_corollary} being the same for any treatment assignment rule.

  Of course, this does not mean that arbitrary treatment assignments can
  be used to achieve this bound in a nonparametric setting.  For example, if one assigns all units to
  treatment, then clearly the ATE cannot even be consistently estimated, since we never
  observe untreated units.  Such
  assignments are optimal in the least favorable submodel, but they can perform
  strictly worse outside of this submodel.
  Again, the analogy of a game against nature is helpful: while the researcher
  is indifferent between certain pure strategies in equilibrium, such pure
  strategies do not themselves constitute equilibrium play.
\end{remark}

\begin{remark}

While Theorem \ref{ATE_LAN_thm} and Corollary
\ref{ate_asymptotic_minimax_corollary} concern a submodel that is least
favorable for a particular treatment allocation, the efficiency bound applies to
any treatment allocation scheme and any estimator.
Furthermore, a submodel of this form will be contained in any
class of distributions $\mathcal{P}$ local to $P_{\theta^*}$ that is
constrained only by regularity conditions such as smoothness conditions and
moment bounds.\footnote{For example, one can construct such a submodel from any $P_{\theta^*}$ by
multiplying the likelihoods $f_X(x;\theta^*)$ and
$\{f_{Y(w)|X}(y|x;\theta^*)\}_{w\in\{0,1\}}$ by an appropriate function of the
score as in \citet[][Example 25.16]{van_der_vaart_asymptotic_1998}.}
Thus, Corollary \ref{ate_asymptotic_minimax_corollary} gives a lower
bound for any estimator under any treatment allocation scheme when one considers
worst-case risk over a class of data generating processes for potential outcomes
that is constrained only by regularity conditions such as smoothness conditions
and moment bounds.
A similar comment applies to
Theorem \ref{general_LAN_thm} and Corollary
\ref{general_asymptotic_minimax_corollary} below.
\end{remark}

\section{Multiple Treatments and Constraints}\label{general_te_section}

We now generalize the setup in Section \ref{binary_ate_formal_sec} to derive
efficiency bounds allowing for multiple treatments and constraints on the number
of units sampled or assigned to each treatment.  Such constraints may arise from
a budget constraint on a costly treatment, or on the overall number of units
sampled.
We first describe the general setup and results (Section \ref{multiple_treatments_general_results_sec}) and then
describe some particular applications that arise as special cases (Section \ref{multiple_treatments_applications_sec}).

\subsection{Setup and Results}\label{multiple_treatments_general_results_sec}

Consider a parameter
\begin{align}\label{general_additive_parameter_eq}
  \tau=\sum_{w\in\mathcal{W}} E[a(X_i,Y_i(w),w)]=\sum_{w\in\mathcal{W}} E[\tilde Y_i(w)].
\end{align}
where $\tilde Y_i(w)=a(X_i,Y_i(w),w)$ for a function $a(x,y,w)$ specified by the
researcher.\footnote{The results in this section can be applied more generally
  to parameters that have the same efficient influence function as a parameter
  that takes the form in (\ref{general_additive_parameter_eq}).  See Remark
  \ref{general_parameter_remark} and the example in Section
  \ref{other_parameters_sec}.}
Consider first a treatment assignment rule in which treatment $w$ is assigned with
probability $p(X_i,w)$ given $X_i$, independently over $i$.  We allow for the
possibility that the treatment probabilities do not add up to one, in which case
we set $W_{n,i}=-1$ and $Y_i=0$ with probability
$1-\sum_{w\in\mathcal{W}}p(X_i,w)$ conditional on $X_i$.
We will show that no further efficiency gain is possible relative to an
estimator that achieves the semiparametric efficiency bound under this
independent sampling scheme with $p()$ chosen to minimize this bound.

The semiparametric efficiency bound for $\tau$ under this sampling
scheme\footnote{The semiparametric efficiency calculations here correspond to a
  known conditional treatment probability $p(X_i,w)$, which reflects the fact
  that $p(X_i,w)$ is known to the experimenter.  However, the efficiency bound
  $v_{p(\cdot)}$ turns out to be the same for this parameter as in the case
  where $p(X_i,w)$ is unknown \citep[see, for example, the discussion in][p. 142]{cattaneo_efficient_2010}.}
at a
distribution corresponding to $\theta^*$ is given by
\begin{align*}
  v_{p(\cdot)}=var_{\theta^*}\left[ \sum_{w\in\mathcal{W}} \tilde\mu_{\theta^*}(X_i,w) \right]
     + \sum_{w\in\mathcal{W}} E_{\theta^*}\frac{\tilde\sigma^2_{\theta^*}(X_i,w)}{p(X_i,w)}
\end{align*}
where $\tilde\mu_{\theta^*}(X_i,w)=E_{\theta^*}[\tilde Y_i(w)|X_i]$ and
$\tilde\sigma^2_{\theta^*}(X_i,w)=var_{\theta^*}(\tilde Y_i(w)|X_i)$. 
The least favorable submodel takes the form in Section \ref{main_results_sec} with
\begin{align}\label{general_lf_submodel_eq}
&s_X(X_i)=\sum_{w\in\mathcal{W}}[\tilde\mu_{\theta^*}(X_i,w)-E_{\theta^*}\tilde\mu_{\theta^*}(X_i,w)]  \nonumber  \\
&s_{w}(Y_i(w)|X_i)=\frac{\tilde Y_i(w)-\mu_{\theta^*}(X_i,w)}{p(X_i,w)},  \quad 
w\in \mathcal{W}
\end{align}
The score function for this submodel is
\begin{align*}
s(X_i,\{Y_i(w)\}_{w\in\mathcal{W}},W_i) = s_X(X_i) + \sum_{w\in\mathcal{W}}I(W_{n,i}=w)s_w(Y_i(w)|X_i).
\end{align*}
Furthermore, letting $\tau(\theta)=\sum_{w\in\mathcal{W}} E_\theta[a(X_i,Y_i(w),w)]=\sum_{w\in\mathcal{W}} E_\theta[\tilde Y_i(w)]$
for $\theta$ in this submodel, 
$\tau(\theta)$ is differentiable at $\theta^*$
in the sense of 
p. 363 of \citet{van_der_vaart_asymptotic_1998}, and 
$s(X_i,\{Y_i(w)\}_{w\in\mathcal{W}},W_i)$ is the efficient influence function, so that
\begin{align}\label{tau_deriv_eq}
  \tau(\theta^*+t)-\tau(\theta^*)=t E_{\theta^*}s(X_i,\{Y_i(w)\}_{w\in\mathcal{W}},W_i)^2 + o(t)
  =t v_{p(\cdot)} + o(t)
\end{align}
as $t\to 0$.
This follows by arguments similar to those in \citet{hahn_role_1998}.
These arguments require regularity conditions on the submodel to ensure that
certain derivatives can be taken under integrals.  Rather than stating these as
primitive conditions, we will assume (\ref{tau_deriv_eq}) directly.

Consider minimizing $v_{p(\cdot)}$ over $p(\cdot)$ subject to constraints
\begin{align}\label{indep_sampling_constraints_eq}
  \sum_{w\in \mathcal{W}} p(x,w) \le 1 \text{ all }x,
  \quad
  \sum_{w\in \mathcal{W}} E_{\theta^*} r(X_i,w)p(X_i,w)\le c
\end{align}
where $c$ is a $d_r\times 1$ vector and $r(\cdot)$ is a $d_r\times 1$ vector
valued function.  The first constraint simply states that treatment
probabilities do not add up to more than one.  The second constrains some linear
combination of overall treatment probabilites.
For example, if $\mathcal{W}=\{0,1\}$ with $1$ corresponding to a costly
treatment, we could take $r(x,w)=I(w=1)$ to incorporate a constraint on overall
cost of the experiment, as in \citet{hahn_adaptive_2011} (see Section \ref{costly_treatment_section} below).
Letting $\lambda(x)$ and $\mu$ be Lagrange multipliers for these constraints and
dropping the first term of $v_{p(\cdot)}$, which does not depend on $p(\cdot)$,
the Lagrangian is
\begin{align*}
  \mathcal{L}
  =E_{\theta^*}\left\{ \sum_{w\in\mathcal{W}} \frac{\tilde\sigma^2_{\theta^*}(X_i,w)}{p(X_i,w)}
  + \lambda(X_i)\left[ \sum_{w\in\mathcal{W}} p(X_i,w) - 1 \right]
  + \mu'\left[ \sum_{w\in\mathcal{W}}r(X_i,w)p(X_i,w) - c \right]\right\}.
\end{align*}
Let $p^*(x,w)$ be the choice of $p(\cdot)$ that solves this problem.
Taking first order conditions gives
\begin{align}\label{p_foc_eq}
  \frac{\tilde\sigma^2_{\theta^*}(x,w)}{p^*(x,w)^2}
  = \lambda(x) + \mu'r(x,w)
   \quad \text{all }x,w.
\end{align}
The complementary slackness conditions are
\begin{align}\label{p_slackness_eq}
  \lambda(x)\sum_{w\in\mathcal{W}}p^*(x,w) = \lambda(x)
  \text{ all } x,
  \quad
  \mu_k \sum_{w\in\mathcal{W}} E_{\theta^*}p^*(X_i,w)r_k(X_i,w)
  = \mu_k c_k
  \,\, k=1,\ldots,d_r.
\end{align}
Note, in particular that, in the least favorable submodel,
$I_{Y(w)|X}(x)=\frac{\tilde\sigma^2_{\theta^*}(x,w)}{p^*(x,w)^2}=\lambda(x) +
\mu'r(x,w)$, and the semiparametric efficiency bound can be written as
\begin{align*}
  &v_{p^*(\cdot)}= I_X + \sum_{w\in\mathcal{W}} E_{\theta^*}p^*(X_i,w)I_{Y(w)|X}(X_i)  \\
  &= I_X + \sum_{w\in\mathcal{W}} E_{\theta^*}p^*(X_i,w)\lambda(X_i)
  + \mu' \sum_{w\in\mathcal{W}} E_{\theta^*}p^*(X_i,w)r(X_i,w)  \\
  &= I_X + E_{\theta^*}\lambda(X_i)
  + \mu' c
\end{align*}
where the last step uses the complementary slackness condition (\ref{p_slackness_eq}).

Now consider the performance of an alternative sampling scheme $w_{n,i}(X^{(n)},Y_{n}^{(i-1)},U)$
under this submodel.
We impose that the constraints (\ref{indep_sampling_constraints_eq}) hold on
average, in the sense that
\begin{align}\label{arbitrary_sampling_constraints_eq}
  \frac{1}{n}\sum_{i=1}^n\sum_{w\in \mathcal{W}} r(X_i,w)I(W_{n,i}=w)\le c + o_{P_{\theta^*}}(1).
\end{align}

\begin{theorem}\label{general_LAN_thm}
  Consider a model satisfying Assumption \ref{qmd_assump} with $s_X$ and $s_w$ given by
  (\ref{general_lf_submodel_eq})
  with $p(\cdot)$ satisfying (\ref{p_foc_eq}) and (\ref{p_slackness_eq}).
  Let $w_{n,i}(X^{(n)},Y_n^{(i-1)},U)$ be any sequence of treatment rules
  satisfying (\ref{arbitrary_sampling_constraints_eq}).
  Then the sequence of experiments $P_{\theta^*+h/\sqrt{n}}$ (possibly modified
  so that it is defined on 
  $X^{(n)},Y^{(n)},U,Z^{(n)}$ where $Z^{(n)}$ is an auxiliary random variable
  and the marginal distribution of $X^{(n)},Y^{(n)},U$ remains unchanged)
  is locally
  asymptotically normal \citep[as defined in Definition 7.14, p. 104
  of][]{van_der_vaart_asymptotic_1998} with information $v_{p^*(\cdot)}$:
  $\log \frac{dP_{\theta^*+h/\sqrt{n}}}{dP_{\theta^*}}(U,X^{(n)},Y_n^{(n)},Z^{(n)})$ converges in distribution to a $N(-h^2v_{p^*(\cdot)}/2,h^2
  v_{p^*(\cdot)})$ law under $\theta^*$.
\end{theorem}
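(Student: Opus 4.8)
The plan is to follow the template of the proof of Theorem \ref{ATE_LAN_thm}, with one essential modification: under constraints the conditional information is no longer equalized across treatment groups, so instead of showing that $\tilde I_n$ converges to a deterministic limit, I would establish an asymptotic \emph{upper bound} $\tilde I_n\le v_{p^*(\cdot)}+o_{P_{\theta^*}}(1)$ and feed it into the second (augmented) branch of Corollary \ref{lr_clt_corollary}. This is precisely why the statement of the theorem permits the auxiliary variable $Z^{(n)}$.

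First I would use the first-order condition (\ref{p_foc_eq}) to write the conditional information in the least favorable submodel as
\begin{align*}
  I_{Y(w)|X}(x)=\frac{\tilde\sigma^2_{\theta^*}(x,w)}{p^*(x,w)^2}=\lambda(x)+\mu'r(x,w).
\end{align*}
Substituting this into $\tilde I_n=I_X+\frac{1}{n}\sum_{i=1}^n\sum_{w\in\mathcal{W}}I(W_{n,i}=w)I_{Y(w)|X}(X_i)$ and using $\sum_{w\in\mathcal{W}}I(W_{n,i}=w)=I(W_{n,i}\in\mathcal{W})\le 1$ to collect the $w$-free $\lambda$ term separately gives
\begin{align*}
  \tilde I_n=I_X
  +\frac{1}{n}\sum_{i=1}^n\lambda(X_i)\,I(W_{n,i}\in\mathcal{W})
  +\mu'\frac{1}{n}\sum_{i=1}^n\sum_{w\in\mathcal{W}}r(X_i,w)\,I(W_{n,i}=w).
\end{align*}

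Next I would bound each of the two non-$I_X$ terms from above. Since $\lambda(\cdot)$ and $\mu$ are Lagrange multipliers for inequality constraints they are nonnegative, so $I(W_{n,i}\in\mathcal{W})\le 1$ and the law of large numbers yield $\frac{1}{n}\sum_{i=1}^n\lambda(X_i)I(W_{n,i}\in\mathcal{W})\le\frac{1}{n}\sum_{i=1}^n\lambda(X_i)=E_{\theta^*}\lambda(X_i)+o_{P_{\theta^*}}(1)$, while the ``on average'' constraint (\ref{arbitrary_sampling_constraints_eq}) together with $\mu\ge 0$ gives $\mu'\frac{1}{n}\sum_{i=1}^n\sum_{w\in\mathcal{W}}r(X_i,w)I(W_{n,i}=w)\le\mu'c+o_{P_{\theta^*}}(1)$. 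Combining these with the slackness-based identity $v_{p^*(\cdot)}=I_X+E_{\theta^*}\lambda(X_i)+\mu'c$ derived above the theorem delivers $\tilde I_n\le v_{p^*(\cdot)}+o_{P_{\theta^*}}(1)$. I would then invoke the $Z^{(n)}$-augmentation branch of Corollary \ref{lr_clt_corollary} with $v_{p^*(\cdot)}$ in the role of $\tilde I^*$ to conclude that the likelihood ratio on the augmented space converges in distribution to the stated $N(-h^2v_{p^*(\cdot)}/2,h^2v_{p^*(\cdot)})$ law.

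The main obstacle, and the reason the pure law-of-large-numbers argument of Theorem \ref{ATE_LAN_thm} does not transfer verbatim, is that here $I_{Y(w)|X}(X_i)$ genuinely depends on $w$ through $\mu'r(x,w)$, so $\tilde I_n$ need not converge to a single limit for an arbitrary rule. What rescues the argument is that every inequality points the same way: nonnegativity of the multipliers, the bound $I(W_{n,i}\in\mathcal{W})\le 1$, and the averaged constraint (\ref{arbitrary_sampling_constraints_eq}) jointly force $\tilde I_n$ to sit asymptotically below $v_{p^*(\cdot)}$, which is exactly the hypothesis the augmented branch of Corollary \ref{lr_clt_corollary} requires. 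The points demanding care are verifying the sign conventions on $\lambda(\cdot)$ and $\mu$ from the KKT conditions, and checking that the $\lambda$ term correctly accounts for units left unassigned (the $W_{n,i}=-1$ case), which is what makes $I(W_{n,i}\in\mathcal{W})$ strictly smaller than one possible and thus the inequality rather than equality the natural outcome.
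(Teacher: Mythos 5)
Your proposal is correct and follows essentially the same route as the paper's proof: substitute the first-order condition to write $I_{Y(w)|X}(X_i)=\lambda(X_i)+\mu'r(X_i,w)$, bound $\tilde I_n$ above by $I_X+E_{\theta^*}\lambda(X_i)+\mu'c+o_{P_{\theta^*}}(1)=v_{p^*(\cdot)}+o_{P_{\theta^*}}(1)$ using nonnegativity of the multipliers, the constraint (\ref{arbitrary_sampling_constraints_eq}), the law of large numbers, and the complementary-slackness identity, then invoke the $Z^{(n)}$-augmented branch of Corollary \ref{lr_clt_corollary}. The only difference is that you spell out explicitly the sign conventions and the $I(W_{n,i}\in\mathcal{W})\le 1$ step that the paper's one-line inequality leaves implicit.
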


Theorem \ref{general_LAN_thm} and the differentiability condition
(\ref{tau_deriv_eq}) imply that a normal shift experiment with variance
$v_{p^*(\cdot)}$ provides a bound on the performance of any decision and under
any feasible treatment rule in this submodel.
We now provide a formal statement
for estimation in the form of a local asymptotic
minimax theorem.  This generalizes Corollary \ref{ate_asymptotic_minimax_corollary}
to the setting considered in this section.
As with Corollary \ref{ate_asymptotic_minimax_corollary}, we omit other
efficiency statements (such as efficiency bounds for hypothesis tests, or bounds
on the variance of regular estimators) in the interest of space.

\begin{corollary}\label{general_asymptotic_minimax_corollary}
  Suppose, in addition, that (\ref{tau_deriv_eq}) holds.  Let
  $\hat\tau_n=\hat\tau_n(X^{(n)},Y_n^{(n)},W^{(n)})$ be any sequence of estimators
  computed under some sequence of treatment rules
  $W_{n,i}=w_{n,i}(X^{(n)},Y_{n}^{(i-1)},U)$.  For any loss function $L$ that is
  subconvex \citep[as defined on p. 113 of][]{van_der_vaart_asymptotic_1998}, we
  have
  \begin{align*}
    \sup_{A}\liminf_{n\to\infty}\sup_{h\in A}
    E_{\theta^*+h/\sqrt{n}} L(\sqrt{n}(\hat\tau_n - \tau(\theta^*+h/\sqrt{n})))
    \ge E_{T\sim N(0,v_{p^*(\cdot)})} L(T)
  \end{align*}
  where the first supremum is over all finite sets in $\mathbb{R}$.
\end{corollary}

\begin{remark}\label{general_parameter_remark}
  Corollary \ref{general_asymptotic_minimax_corollary} applies to any parameter
  such that, when the parameter $\tau(\theta)$ is defined as a function of
  $\theta$ in the given submodel,
  the differentiability condition (\ref{tau_deriv_eq}) holds.
  In addition to parameters $\tau$ that directly take the form given in
  (\ref{general_additive_parameter_eq}), this includes
  other parameters that share the same form of efficient influence function.
  Quantile treatment effects, which we discuss in Section
  \ref{other_parameters_sec} below, are one example of a parameter that falls
  into this category.
\end{remark}

\subsection{Applications}\label{multiple_treatments_applications_sec}

We now describe several applications that fall into the general setup of Section
\ref{multiple_treatments_general_results_sec}.

\subsubsection{Costly treatment}\label{costly_treatment_section}

Consider the setting in Section \ref{binary_ate_formal_sec} in which we have binary treatment $W_i\in
\{0,1\}$ and we are interested in the average treatment effect
$E[Y_i(1)-Y_i(0)]$.  This falls into the general setup in Section
\ref{multiple_treatments_general_results_sec} with $a(x,y,w)=I(w=1)-I(w=0)$ and
with treatment probability $p(x,1)=e(x)$.  Suppose that, due to budget
constraints, only a fraction $\overline p$ of individuals can be treated.
\citet{hahn_adaptive_2011} allow for such a constraint to be incorporated as a
bound on the overall treatment probability.
In the case of iid treatment
assignment with propensity score $e(\cdot)$, this constraint takes the form in (\ref{indep_sampling_constraints_eq}) with
$r(x,w)=I(w=1)$ and $c=\overline p$:
\begin{align*}
  P(W_i=1)=E e(X_i)\le \overline p.
\end{align*}
Rearranging the first order conditiongs given in (\ref{p_foc_eq}) and plugging
in $p^*(x,1)=e^*(x)$ and $p^*(x,0)=1-e^*(x)$ gives
\begin{align}\label{costly_treatment_focs_eq}
  &\frac{\sigma^2(x,1)}{e^*(x)}=\frac{\sigma^2(x,0)}{(1-e^*(x))}+\mu
\end{align}
where $\mu$ is the Lagrange multiplier on the constraint $Ee(X_i)\le \overline p$.
The optimal iid treatment allocation can then be solved numerically by combining
these first order conditions with the constraint $Ee(X_i)\le \overline p$ to
solve for the Lagrange multiplier $\mu$.
Note that when the constraint doesn't bind, the Lagrange multiplier $\mu$ is
equal to $0$ so that we recover the optimal allocation given in
(\ref{neyman_allocation_eq}) for the unconstrained case.

To make this allocation feasible, \citet{hahn_adaptive_2011} propose to use a
pilot sample to estimate $\sigma^2(x,w)$.  This amounts to the algorithm
described in Section \ref{informal_description_sec}, with the formula (\ref{neyman_allocation_eq})
replaced by the formula for the constrainted optimal treatment assignment given
in (\ref{costly_treatment_focs_eq}).
While \citet{hahn_adaptive_2011} focus on iid treatment allocations, it follows from Theorem
\ref{general_LAN_thm} and Corollary
\ref{general_asymptotic_minimax_corollary} that no further asymptotic
efficiency improvement is possible, so long as the constraint on overall
treatment holds asymptotically in the sense of
(\ref{arbitrary_sampling_constraints_eq}):
\begin{align*}
  \frac{1}{n}\sum_{i=1}^n W_i\le \overline p+o_P(1).
\end{align*}

\subsubsection{Survey sampling}\label{survey_sampling_example_sec}

While our main focus is on experiments, the setup in Section
\ref{multiple_treatments_general_results_sec} can be adapted to the problem of
survey sampling.  Suppose we have an initial iid sample of $n$ units for which we
observe information about some baseline covariates $X_i$.  We are tasked with
choosing a subset of these individuals to conduct a survey asking about some
outcome variable $\tilde Y_i$, with the goal of learning the average value
$E[\tilde Y_i]$ of $\tilde Y_i$ in the population.
In conducting this survey, we have a constraint that
no more than $\overline p\cdot n$ individuals can be surveyed on average, when
individuals are randomly selected for the survey from the initial sample of $n$
individuals.

Let $W_i$ denote an indicator variable equal to $1$ if
individual $i$ is selected for the survey and equal to $-1$ if not.
This fits into our framework with a single potential
outcome $Y_i(1)$ which we denote simply by $\tilde Y_i$ (recall that we allow for
the possibility of setting $W_i=-1$ and not observing any potential outcome, so
that $\mathcal{W}=\{-1,1\}$ in this case).
The constraint on the total survey size can be expressed as the requirement that
$\frac{1}{n}\sum_{i=1}^nI(W_i=1)\le \overline p+o_P(1)$, which is a constraint of the
form (\ref{arbitrary_sampling_constraints_eq}) with $r(X_i,w)=I(W_i=1)$ and
$c=\overline p$.

For an independent sampling scheme in which $W_i=1$ with probability $p(X_i)$
conditional on $X_i$, this constraint takes the form in
(\ref{indep_sampling_constraints_eq}): $Ep(X_i)\le \overline p$.
The optimal asymptotic variance is then given by
$v_{p(\cdot)}=var(\mu(X_i))+E\frac{\sigma^2(X_i)}{p(X_i)}$.
where $\mu(x)=E[\tilde Y_i|X_i=x]$ and $\sigma^2(X_i)=var(\tilde Y_i|X_i=x)$.
Under regularity conditions on $\mu(x)$, one can achieve this variance with the
estimator $\frac{1}{n}\sum_{i=1}^n \hat \mu(X_i)$ where $\hat \mu(x)$ is a
nonparametric estimate of $\mu(x)$ formed from the sample with $W_i=1$.  For
example, in the case where $X_i$ takes on a finite set of values, we can use the
estimate $\hat\mu(x)=\frac{\sum_{i: X_i=x, W_i=1} Y_i}{\#\{i: X_i=x, W_i=1\}}$.
The first order conditions for 
minimizing $v_{p(\cdot)}$ over $p(\cdot)$ subject to the constraints $Ep(X_i)\le
p$ and $p(x)\le 1$ take the form in (\ref{p_foc_eq}):
\begin{align*}
  \frac{\sigma^2(x)}{p(x)^2}=\lambda(x)+\mu.
\end{align*}
The complementary slackness condition states that $\lambda(x)$ is nonzero
only if $p(x)=1$.  It follows that the optimal sampling probability satisfies
\begin{align*}
  p^*(x)=\min\left\{ \sigma(x)/\sqrt{\mu}, 1 \right\}
\end{align*}
where $\mu$ is chosen so that the constraint $Ep(X_i)\le \overline p$ is
satisfied with equality.
In other words, the fraction $p(x)$ of observations with $X_i=x$ selected for
sampling the outcome $\tilde Y_i$ is proportional to the conditional standard
deviation $\sigma(x)=\sqrt{var(\tilde Y_i|X_i=x)}$ of the outcome, up to the
constraint that $p(x)\le 1$.
The prescription to sample proportionally to the conditional standard deviation
mirrors the formula for optimal allocation in
\citet[][p. 580]{neyman_two_1934} for sampling from a finite population.

\subsubsection{Other parameters of interest}\label{other_parameters_sec}

The general setting in Section \ref{multiple_treatments_general_results_sec}
allows for multiple treatments as well as outcomes that are transformations of
the original treatment variables.  As a simple example of the latter, one may
transform the outcome variable $Y_i$ in a way that one deems relevant to policy.
For example, one may wish to measure the success of a job training program in
terms of the proportion of individuals who obtain a wage above a certain level,
rather than simply using the average wage.
As discussed in Remark \ref{general_parameter_remark}, the results in Section
\ref{multiple_treatments_general_results_sec} apply
to other parameters of interest defined in terms of the
distributions of potential outcomes, so long as the efficient influence function
for the given parameter takes the same form as in Section \ref{multiple_treatments_general_results_sec}.
One example of such a parameter is the quantile treatment effect considered by
\citet{firpo_efficient_2007}.
In this setting, the treatment is binary and the object of interest is the
difference in the $\kappa$th quantile of the potential outcomes for some given $\kappa$:
\begin{align*}
  \Delta_\kappa=q_{\kappa,1}-q_{\kappa,0}
\end{align*}
where $q_{\kappa,w}$ denotes the $\kappa$th quantile of the distribution of
$Y_i(w)$ for $w=0,1$.
Under independent treatment assignment with propensity score $e(\cdot)$,
\citet{firpo_efficient_2007} shows that the efficient influence function is
given by
\begin{align*}
  \psi_\kappa(W_i,Y_i,X_i;e(\cdot))&=\frac{W_i}{e(x)}\left[ g_{1,\kappa}(Y_i)-E[g_{1,\kappa}(Y_i(1))|X_i] \right]
  -\frac{1-W_i}{e(x)}\left[ g_{0,\kappa}(Y_i)-E[g_{0,\kappa}(Y_i(0))|X_i] \right]  \\
  &+E[g_{1,\kappa}(Y_i(1))|X_i]
  -E[g_{0,\kappa}(Y_i(0))|X_i]
\end{align*}
where $g_{w,\kappa}(y)=-(I(y\le q_{w,\kappa})-\kappa)/f_{Y(w)}(q_{w,\kappa})$ for
$w=0,1$ where $f_{Y(w)}(y)$ denotes the marginal probability density function
of the potential outcome $Y_i(w)$ (p. 36 of the
supplemental appendix in \citet{firpo_efficient_2007}).
This efficient influence function and the least favorable submodel
used to obtain the semiparametric efficiency bound are the same as the ones for
estimating the parameter defined in Section
\ref{multiple_treatments_general_results_sec} with $a(x,y,1)=g_{1,\kappa}(y)$
and $a(x,y,0)=g_{0,\kappa}(y)$.
In particular, the score function for this submodel is given by
$\psi_\kappa(W_i,Y_i,X_i;e(\cdot))$ and the optimal asymptotic variance is
$v_{e(\cdot)}=E\psi_\kappa(W_i,Y_i,X_i;e(\cdot))^2$.
Solving for the choice of $e(\cdot)$ that minimizes the asymptotic variance
yields an analogue of the Neyman allocation:
\begin{align*}
  \frac{var(g_{1,\kappa}(Y_i(0))|X_i=x)}{[1-e^*(x)]^2}
  =\frac{var(g_{1,\kappa}(Y_i(1))|X_i=x)}{e^*(x)}.
\end{align*}

To verify the conditions of Corollary
\ref{general_asymptotic_minimax_corollary}, it therefore suffices to verify the differentiability
condition (\ref{tau_deriv_eq}) with $\tau(\cdot)$ 
given by the
quantile treatment effect as a function of the parameter $\theta$ in the
submodel.  This is also done on p. 36 of the supplementary appendix of \citet{firpo_efficient_2007} (see equation (A-13)
and the assertion afterward about the efficient influence function):
letting $\Delta_{\kappa}(\theta)$ denote the quantile treatment effect under
$\theta$ in a submodel around $\theta^*$ with score function given above, we have
\begin{align*}
  \Delta_{\kappa}(\theta^*+t)
  =tE_{\theta^*}\psi_\kappa(W_i,Y_i,X_i;e(\cdot))^2+o(t)
  =t v_{e(\cdot)}+o(t)
\end{align*}
where $v_{e(\cdot)}=E_{\theta^*}\psi_\kappa(W_i,Y_i,X_i;e(\cdot))^2$ is the
optimal asymptotic variance in this submodel.
Thus, under the conditions used to formalize the efficiency bound in
\citet{firpo_efficient_2007}, the conclusion of Corollary \ref{general_asymptotic_minimax_corollary}
holds with $v_{p^*()}$ given by the optimized variance in the setup above with
$a(x,y,w)$ given by $g_{w,\kappa}(y)$.

\section{Achieving the bound}\label{achieving_bound_sec}

While the main contribution of this paper is deriving lower bounds, we provide a
brief review here of approaches to achieving these bounds.  For ease of
exposition, we focus on the setting in Section \ref{binary_ate_formal_sec} of
estimating the average treatment effect in the setting of a binary treatment
without cost constraints.

As discussed in Section \ref{informal_description_sec}, a general strategy for
achieving the efficient asymptotic variance $v_{e^*(\cdot)}$ is to designate the
first $n_{\operatorname{pilot}}$ observations as a pilot study and use these
observations to form an estimate $\hat e^*_{\operatorname{pilot}}(\cdot)$ of the
optimal allocation defined in (\ref{neyman_allocation_eq}).  One then implements
this allocation in the remaining portion of the sample.  
When choosing the size of the pilot, the only formal requirements are that
$n_{\operatorname{pilot}}\to\infty$ (required for the estimate $\hat
e^*_{\operatorname{pilot}}(\cdot)$ to be consistent) and
$n_{\operatorname{pilot}}/n\to 0$ (required for the efficiency loss from
designating part of the sample for the pilot study to be asymptotically
negligible).
While any choice of $n_{\operatorname{pilot}}$ satisfying these conditions will
lead to the same asymptotic variance, considerations of finite sample
performance may lead to a more nuanced recommendation, a topic we discuss
briefly in Section \ref{recommendations_for_practice_sec}.

Regarding the choice of sampling scheme and estimator for the remainder of the
sample, there are two basic strategies.
The first is to assign treatment independently across observations, with observation $i$
receiving treatment $W_i=1$ with probability $\hat
e^*_{\operatorname{pilot}}(X_i)$ conditional on $X_i$.  One can then obtain an
efficient estimator using the remaining part of the sample\footnote{In practice, this estimator would be averaged with an
  estimator formed using the pilot observations
  $i=1,\ldots,n_{\operatorname{pilot}}$, but we leave this step out in this
  discussion for ease of exposition.} by using an estimator that achieves the
\citet{hahn_role_1998} bound.
For example, one can use the doubly robust estimator
\begin{align*}
  \hat \tau = \frac{1}{n-n_{\operatorname{pilot}}}\sum_{i=n_{\operatorname{pilot}+1}}^n
  \left\{  \frac{[Y_i-\hat \mu(X_i,1)]W_i}{\hat e^*_{\operatorname{pilot}}(x)}-\frac{[Y_i-\hat \mu(X_i,0)](1-W_i)}{1-\hat e^*_{\operatorname{pilot}}(x)}
  +\left[ \hat \mu(X_i,1)-\hat\mu(X_i,0) \right] \right\}
\end{align*}
where $\hat\mu(x,w)$ is an estimate of the conditional expectation
$E[Y_i(w)|X_i=x]=E[Y_i|X_i=x,W_i=w]$ of the potential outcome $Y_i(w)$.\footnote{The step of estimating the conditional expectation and incorporating it into the
estimator is needed only for efficiency: simply setting $\hat \mu(x,w)=0$ in the
definition above would
lead to a propensity score weighted estimator with a correctly specified
propensity score, which would yield a consistent estimate, but with a higher
asymptotic variance.}
Following the literature, we refer to this approach as ``regression
adjustment,'' since it involves ``adjusting'' a propensity score weighted
estimator using the regression estimate $\hat \mu(x,w)$.

The second strategy is to use stratification on the covariate $X_i$ when
assigning treatment to the remaining $n-n_{\operatorname{pilot}}$ units.
In the case where $X_i$ takes on a relatively small number of values, this can
be done
as follows.
Let
$N(x)$ be the
number of units $i=n_{\operatorname{pilot}}+1,\ldots,n$ with $X_i=x$ and let
$N_{\operatorname{treat}}^*(x)$ be obtained by rounding $\hat
e^*_{\operatorname{pilot}}(x)\cdot N(x)$ to the nearest integer.
One then chooses a subset of exactly $N_{\operatorname{treat}}^*(x)$ of the
units with $X_i=x$ to be treated, with the subset being chosen at random with
each of the ${N(x) \choose N_{\operatorname{treat}}^*(x)}$ possible subsets
having equal probability.
This leads to a treatment assignment with propensity score $\tilde
e^*(x)\equiv N_{\operatorname{treat}}^*(x)/N(x)$, but with dependence across
observations in the treatment decision, since setting $W_i=1$ for an observation
$i$ with $X_i=x$ means that other observations $j$ with $X_j=x$ are less likely
to receive treatment.
One can then obtain an efficient estimate using propensity score weighting:
\begin{align}\label{stratification_pscore_weighted_estimator_eq}
  \hat\tau = \frac{1}{n-n_{\operatorname{pilot}}}\sum_{i=n_{\operatorname{pilot}}+1}^n \left[ \frac{Y_iW_i}{\tilde e^*(x)}-\frac{Y_i(1-W_i)}{1-\tilde e^*(x)} \right].
\end{align}
In the case where $X_i$ is continuously distributed, one can implement
stratification by randomizing within bins (called ``strata'') of observations where
$X_i$ is nearly (but not exactly) the same.
One can then achieve efficiency by increasing the number of strata with the
sample size $n$ in such a way that the
variation in $X_i$ within each stratum decreases quickly enough as $n\to
\infty$, so long as certain smoothness conditions hold \citep[see][]{cytrynbaum_optimal_2023}.

Both of these approaches require some additional regularity conditions in order
to achieve the efficient asymptotic variance.  In the case of independent
treatment assignment, the estimator $\hat \mu(x,w)$ used to form the regression
adjustment should be consistent in order to achieve asymptotic efficiency
(although inconsistent estimates $\hat\mu(x,w)$ can still yield valid inference
due to the doubly robust nature of the estimator).
This means that one will need some smoothness conditions on $\mu(x,w)$ in order
to guarantee the consistency of a flexible nonparametric estimate.
Regularity conditions on $\mu(x,w)$ of a similar form are also needed to achieve
the efficient asymptotic variance when stratifying on continuous variables, in
order to rule out the possibility that the conditional means $\mu(X_i,0)$ and
$\mu(X_i,1)$ vary too much over $X_i$ in the same stratum.

Thus, while results in the literature show that the efficient
asymptotic variance can be achieved in large samples under sufficient
regularity using a number of different approaches, these results rely on
regularity conditions that may be
strong in settings where covariates $X_i$ are high dimensional or take on many
different values.
Furthermore,
asymptotic results may not provide an adequate description of
finite sample performance, particularly in the present setting where the
effective sample size is further decreased through the use of pilot studies.  In
the next section, we
turn these and other practical limitations and considerations of our results and
other results in the literature.

\section{Practical implications and limitations}\label{recommendations_for_practice_sec}

The asymptotic efficiency bounds derived in this paper provide a benchmark for
what can be achieved in sufficiently large samples with sufficient regularity
conditions.  However, there will often be practical limitations that will
make it undesirable to implement experimental designs where the optimal
allocation is estimated from pilot studies or initial waves of the experiment.

The overall size of the experiment and the complexity of the variables $X_i$
that one observes will often severely limit the scope for such designs.  One
must ask: is it really
feasible to use a small fraction of the sample to obtain a reliable estimate of
the Neyman allocation?
If not, one may choose to rely on prior beliefs about the variance of potential
outcomes to come up with a reasonable guess for the optimal allocation.
For example, in settings where treatment effects are expected to be small, one
will expect the conditional variance of $Y_i(w)$ to be similar for both
treatment groups, so that simply setting $e(x)=1/2$ (in the case of estimating
the average treatment effect with a single treatment and no constraints due to
cost of the treatment) will yield a design that is close to efficient.
\citet{cai_performance_2024} provide some discussion and Monte Carlo evidence on
this issue.

Another consideration is that the data from a single experiment may be used for
different purposes.  Since the efficient treatment allocation depends on the
parameter being estimated, one faces tradeoffs between tailoring the design of
the experiment to different goals.
For example, suppose that one is interested in comparing a baseline treatment
$W_i=0$ to two possible treatments, labeled $W_i=1$ and $W_i=2$.
If both of these treatments will be of interest for future policy choices, one
will want to balance the goals of accurate estimation of the treatment effects
$E[Y_i(1)-Y_i(0)]$ and $E[Y_i(2)-Y_i(0)]$.
The question of how treatment assignment should be chosen to balance the goals
of estimating different objects is outside of the scope of the optimality
framework considered here.

As discussed in Section \ref{achieving_bound_sec},
the asymptotic variance bound
in \citet{hahn_role_1998} can be achieved either 
through a combination of iid treatment assignment and regression adjustment,
or through more complicated treatment assignment schemes involving stratification.
The results in this paper confirm that no further asymptotic efficiency gain is
possible using stratification or even with more complex treatment assignment
schemes.
Given the asymptotic equivalence of these different approaches, one may
focus on other considerations when making choices about experimental
design.
For example, one may opt for stratification because the resulting efficient
estimator (the propensity score weighted estimator
(\ref{stratification_pscore_weighted_estimator_eq}))
can be easily described in a preanalysis plan,
making it easy to ``tie one's
hands'' regarding the estimator one will ultimately use.
Conversely, one may opt for iid sampling if practical considerations make
stratification difficult, or if one has prior beliefs about the structure of the
conditional mean $\mu(x,w)$ of potential outcomes that are more easily captured
in a regression specification rather than a stratification scheme.

\bibliography{../../../../library}

\appendix

\section{Proofs}

\subsection{Proof of Theorem \ref{lr_expansion_thm}}

It is immediate from %
Theorem 7.2 in \citet{van_der_vaart_asymptotic_1998} that
\begin{align*}
  \sum_{i=1}^n\tilde \ell_X(X_i;\theta_n)
  =\frac{1}{\sqrt{n}}\sum_{i=1}^nh's_X(X_i)
  - \frac{1}{2}h'I_Xh + o_{P_{\theta^*}}(1).
\end{align*}
To prove (\ref{lr_expansion_eq}), we obtain a similar decomposition for the terms involving $\tilde
\ell_{Y(w)|X}$.  Let $w\in\mathcal{W}$ be given.
Let $V_{n,i}=2\left[\frac{\sqrt{f_{Y(w)|X}(Y_i(w)|X_i;\theta_n)}}{\sqrt{f_{Y(w)|X}(Y_i(w)|X_i;\theta^*)}}-1\right]$.
The qmd condition then implies
$nE_{\theta^*}[(V_{n,i}-n^{-1/2}h's_w(Y_i(w)|X_i))^2]\to 0$.
Note that
\begin{align*}
\tilde \ell_{Y(w)|X}(Y_i,X_i;\theta)=2\log \left(1+\frac{1}{2}V_{n,i}\right)
=V_{n,i}  - \frac{1}{4}V_{n,i}^2 + V_{n,i}^2r(V_{n,i})
\end{align*}
where the last equality uses a second order Taylor expansion of $t\mapsto 2\log
(1+t/2)$, with $\lim_{t\to 0}r(t)=0$.
It follows immediately from the proof of
Theorem 7.2 in \citet{van_der_vaart_asymptotic_1998} that
$\sum_{i=1}^nI(W_{n,i}=w)V_{n,i}^2|r(V_{n,i})|\le
\sum_{i=1}^nV_{n,i}^2|r(V_{n,i})|=o_{P_{\theta^*}}(1)$.  Thus,
\begin{align*}
  \sum_{i=1}^n I(W_{n,i}=w)\tilde \ell_{Y(w)|X}(Y_i,X_i;\theta)
  =\sum_{i=1}^n I(W_{n,i}=w)V_{n,i}  - \frac{1}{4}\sum_{i=1}^n I(W_{n,i}=w)V_{n,i}^2
  + o_{P_{\theta^*}}(1).
\end{align*}
We will show that each of the terms
\begin{align}
  &\sum_{i=1}^nI(W_{n,i}=w)\left[V_{n,i}-E_{\theta^*}[V_{n,i}|X_i]-n^{-1/2}h's_w(Y_i(w)|X_i)\right]  \label{v_minus_expectation}  \\
  &\sum_{i=1}^nI(W_{n,i}=w)\left\{ E_{\theta^*}[V_{n,i}|X_i]+ \frac{1}{4n} h'I_{Y(w)|X}(X_i)h] \right\}  \label{v_expectation}  \\
  &\sum_{i=1}^nI(W_{n,i}=w)\left[  V_{n,i}^2 - \frac{1}{n}h'I_{Y(w)|X}(X_i)h \right]
    \label{v_squared}
\end{align}
converge in probability to zero under $\theta^*$.

Let
$A_{n,i}=V_{n,i}-E[V_{n,i}|X_i]-n^{-1/2}h's_w(Y_i(w)|X_i)$ 
so that the summand in (\ref{v_minus_expectation}) is given by
$I(W_{n,i}=w)A_{n,i}$.
For $i\le n$, let $\mathcal{F}_{2,n,i}$ denote the sigma algebra generated by
$X^{(n)}$, $\{Y_j(w)\}_{w\in\mathcal{W},1\le j\le i-1}$ and $U$.  Note that
$W_{n,i}$ is measurable with respect to $\mathcal{F}_{2,n,j}$ for $j\ge i$, and that
$A_{n,i}$ is measurable with respect to $\mathcal{F}_{2,n,j}$ for $j>i$.  In
addition,
$E_{\theta^*}[A_{n,i}|\mathcal{F}_{2,n,i}]=E_{\theta^*}[A_{n,i}|X_i]=0$, where
the last step uses the fact that $s_w$ is a score function conditional on $X_i$.
Thus, for $j>i$,
\begin{align*}
&E_{\theta^*}\left[ I(W_{n,i}=w)I(W_{n,j}=w)A_{n,i}A_{n,j}|\mathcal{F}_{2,n,j}
\right]
    = I(W_{n,i}=W_{n,j}=w)A_{n,i} E_{\theta^*}\left[ A_{n,j}|X_{j}
\right]=0
\end{align*}
so that the expectation of the square of (\ref{v_minus_expectation}) is given by
\begin{align*}
  \sum_{i=1}^n E_{\theta^*}I(W_{n,i}=w)A_{n,i}^2
  \le n E_{\theta^*}A_{n,i}^2
  \le n E_{\theta^*}\left\{ \left[ V_{n,i}-n^{-1/2}h's_w(Y_i(w)|X_i) \right]^2 \right\}
  \to 0
\end{align*}
by qmd, where the last inequality uses the fact that $A_{n,i}$ is equal to
$V_{n,i}-n^{-1/2}h's_w(Y_i(w)|X_i)$ minus its expectation given $X_i$.

For (\ref{v_expectation}), note that
\begin{align*}
  & E_{\theta^*}\left[ V_{n,i} |X_i\right]
  = E_{\theta^*}\left[ 2\frac{\sqrt{f_{Y(w)|X}(Y_i|X_i,\theta_n)}}{\sqrt{f_{Y(w)|X}(Y_i|X_i,\theta^*)}}-2 \bigg|X_i\right]  \\
  &= E_{\theta^*}\left[ 2\frac{\sqrt{f_{Y(w)|X}(Y_i|X_i,\theta_n)}}{\sqrt{f_{Y(w)|X}(Y_i|X_i,\theta^*)}}-\frac{f_{Y(w)|X}(Y_i|X_i,\theta_n)}{f_{Y(w)|X}(Y_i|X_i,\theta^*)} -1 \bigg|X_i\right]  \\
  &= -E_{\theta^*}\left[\left(  \frac{\sqrt{f_{Y(w)|X}(Y_i|X_i,\theta_n)}}{\sqrt{f_{Y(w)|X}(Y_i|X_i,\theta^*)}} - 1 \right)^2 \bigg|X_i\right]
    = -\frac{1}{4}E_{\theta^*}\left[ V_{i,n}^2 |X_i\right].
\end{align*}
Thus, the expectation of the absolute value of (\ref{v_expectation}) is bounded
by $1/4$ times
\begin{align*}
  &n E_{\theta^*}\left\{ \left| E_{\theta^*}\left[  V_{i,n}^2 - h'I_{Y(w)|X}(X_i)h/n  |X_i\right] \right| \right\}
    =  E_{\theta^*}\left\{ \left| E_{\theta^*}\left[ nV_{i,n}^2 - (h's_w(Y_i(w)|X_i))^2  |X_i\right] \right| \right\}.
\end{align*}
Letting $\tilde V_i=h's_w(Y_i(w)|X_i)$, this is bounded by
\begin{align*}
  &E_{\theta^*}\{ |nV_{i,n}^2 - [h's_w(Y_i(w)|X_i)]^2| \}
  = E_{\theta^*}(|n V_{i,n}^2 - \tilde V_i^2|)
  = E_{\theta^*}[|(\sqrt{n} V_{i,n} + \tilde V_i)(\sqrt{n} V_{i,n} - \tilde V_i)|]  \\
  &\le \sqrt{E_{\theta^*}[(\sqrt{n} V_{i,n} + \tilde V_i)^2]}\sqrt{E_{\theta^*}[(\sqrt{n} V_{i,n} - \tilde V_i)^2]}  \\
  &\le \left\{2\sqrt{E_{\theta^*}(\tilde V_i^2)}+\sqrt{E_{\theta^*}[(\sqrt{n} V_{i,n} - \tilde V_i)^2]}\right\}\sqrt{E_{\theta^*}[(\sqrt{n} V_{i,n} - \tilde V_i)^2]}.
\end{align*}
This converges to zero since
$E_{\theta^*}[(\sqrt{n} V_{i,n} - \tilde V_i)^2]=nE_{\theta^*}[(V_{i,n} -
n^{-1/2}h's_w(Y_i(w)|X_i))^2]\to 0$ by qmd.

For (\ref{v_squared}), note that
$E_{\theta^*}\left\{ \left| \sum_{i=1}^nI(W_{n,i}=w)\left[ V_{i,n}^2 - (n^{-1/2}h's_w(Y_i(w)|X_i))^2 \right] \right| \right\}$ 
is bounded by
$E_{\theta^*}\{ |nV_{i,n}^2 - [h's_w(Y_i(w)|X_i)]^2| \}$,
which was shown above to converge to zero.
Thus, to show that
(\ref{v_squared}) converges in probability to zero under $\theta^*$, it suffices
to show that
$\frac{1}{n}\sum_{i=1}^nI(W_{n,i}=w)\left[ (h's_w(Y_i(w)|X_i))^2 - h'I_{Y(w)|X}(X_i)h \right]$
converges in probability to zero under $\theta^*$.
This follows by a law of large numbers for martingale
difference arrays \citep[Theorem 2 in][]{andrews_laws_1988}, since
the summand is a martingale difference array with respect to the
filtration $\mathcal{F}_{2,n,i}$, and it is uniformly integrable under
$\theta^*$ since
it is bounded by the sequence $(h's_w(Y_i|X_i))^2+h'I_{Y(w)|X}(X_i)h$, which is
iid and has finite mean. 
This completes the proof of (\ref{lr_expansion_eq}).

\subsection{Proof of Corollary \ref{lr_clt_corollary}}

We use a martingale representation similar to the one used for matching estimators by
\citet{abadie_martingale_2012}.
For $i=1,\ldots,n$, let $\widetilde{\mathcal{F}}_{n,i}$ denote the sigma algebra
generated by $X_1,\ldots,X_i$, and let $B_{n,i}=h's_X(X_i)/\sqrt{n}$.  For $i=n+1,\ldots,
2n$, let $\widetilde{\mathcal{F}}_{n,i}$,
denote the sigma algebra generated by
$X^{(n)}$, $\{Y_j(w)\}_{w\in\mathcal{W},1\le j\le i-1-n}$ and $U$,
and let $B_{n,i}=\sum_{w\in\mathcal{W}}I(W_{n,i-n}=w)h's_{w}(Y_{i-n}(w)|X_{i-n})/\sqrt{n}$.
Then $\{B_{n,i}\}_{i=1}^{2n}$ is a martingale difference array with respect to
the filtration $\left\{\widetilde{\mathcal{F}}_{n,i}\right\}_{i=1}^{2n}$.  In addition,
$\sum_{i=1}^{2n}E_{\theta^*}[B_{n,i}^2|\widetilde{\mathcal{F}}_{n,i-1}]=h'\tilde I_n h$,
and, by Theorem \ref{lr_expansion_thm}, we have
$\ell_{n,h}=\sum_{i=1}^{2n}B_{n,i}-h'\tilde I_n h/2+o_{P_{\theta^*}}(1)$.
In case (i) (where $\tilde I_n$ converges in probability to
$\tilde I^*$ under $\theta^*$),
it then immediately from a central limit theorem for martingale
arrays \citep[Theorem 35.12][]{billingsley_probability_1995} that
$\ell_{n,h}$
converges
to a $N(-h'\tilde I^*h/2,h'\tilde I^*h)$ law under $\theta^*$ (the Lindeberg condition follows
since $\{B_{n,i}\}_{i=1}^n$ and $\{B_{n,i}\}_{i=n+1}^{2n}$ are each dominated by
sequences of iid variables with finite second moment).

Now consider case (ii), where $\tilde I_n\le \tilde I^*+o_{P_{\theta^*}}(1)$.  Let
$\Sigma_n=\Sigma_n(X^{(n)})$ be a sequence of positive semidefinite symmetric
matrices with $\tilde I_n+\Sigma_n=I^*+o_{P_{\theta^*}}(1)$.  Given
$U,X^{(n)},Y_{n}^{(n)}$, let $Z_1,\ldots,Z_n$ be iid and normally distributed
under $\theta$ with identity covariance and mean
$\Sigma_n^{1/2}(\theta-\theta^*)$.
Then
\begin{align*}
  &\tilde \ell_{n,h}=\log \frac{dP_{\theta^*+h/\sqrt{n}}}{dP_{\theta^*}}(U,X^{(n)},Y_n^{(n)},Z^{(n)})
  = \ell_{n,h} + \sum_{i=1}^n Z_i'\Sigma_n^{1/2}h/\sqrt{n}
  - h'\Sigma_nh/2  \\
  &= \sum_{i=1}^{2n}B_{n,i} + \sum_{i=1}^n Z_i'\Sigma_n^{1/2}h/\sqrt{n}
    -h'(\tilde I_n+\Sigma_n) h/2 + o_{P_{\theta^*}}(1)
\end{align*}
where the last step applies Theorem \ref{lr_expansion_thm}.
Let us define $B_{n,i}=Z_{i-2n}'\Sigma_n^{1/2}h/\sqrt{n}$ for $i=2n+1,\ldots
3n$, so that the above display can be written as $\sum_{i=1}^{3n}B_{n,i} - h'(\tilde I_n+\Sigma_n) h/2 +
o_{P_{\theta^*}}(1)$.  Letting
$\widetilde{\mathcal{F}}_{n,i}$ be the sigma algebra generated by
$\widetilde{\mathcal{F}}_{n,2n}$ and $Z_1,\ldots,Z_{i-2n}$ for
$i=2n+1,\ldots,n$, 
$\{B_{n,i}\}_{i=1}^{3n}$ is a martingale difference array with respect to
the filtration $\left\{\widetilde{\mathcal{F}}_{n,i}\right\}_{i=1}^{3n}$.
Furthermore, $\sum_{i=1}^{3n} E_{\theta^*}[B_{n,i}^2|\widetilde{\mathcal{F}}_{n,i-1}]= \tilde
h'(I_n+\Sigma_n)h=h'\tilde I^*h+o_{P_{\theta^*}}(1)$, and it satisfies the Lindeberg condition by the arguments above and uniform boundedness of $\Sigma_n$.
It therefore follows that $\tilde \ell_{n,h}$ converges in distribution under
$\theta^*$ to a $N(-h'\tilde I^*h/2,h'\tilde I^*h)$ law as claimed.

\subsection{Proof of Theorem \ref{ATE_LAN_thm}}

We have
$I_{Y(0)|X}(X_i)
= E[s_{Y(0)|X}(Y_i|X_i)^2|X_i]
= \frac{\sigma^2_{\theta^*}(X_i,0)}{[1-e^*(X_i)]^2}$
and 
$I_{Y(1)|X}(X_i)
= E[s_{Y(1)|X}(Y_i|X_i)^2|X_i]
= \frac{\sigma^2_{\theta^*}(X_i,1)}{e^*(X_i)^2}$
so that, by (\ref{neyman_allocation_eq}), $I_{Y(0)|X}(X_i)=I_{Y(1)|X}(X_i)$.
Letting $I_{Y|X}(X_i)=I_{Y(0)|X}(X_i)=I_{Y(1)|X}(X_i)$, we then have
\begin{align*}
  I_X + \frac{1}{n}\sum_{i=1}^n\sum_{w\in\{0,1\}}I(W_{n,i}=w) I_{Y(w)|X}(X_i)
  =  I_X + \frac{1}{n}\sum_{i=1}^n I_{Y|X}(X_i),
\end{align*}
which converges to $v_{e^*(\cdot)}$ under $\theta^*$ by the law of large numbers.
Thus, applying Corollary \ref{lr_clt_corollary}(i) with $v_{e^*(\cdot)}$ playing
the role of $\tilde I^*$, $\ell_{n,h}$ converges to a $N(-h^2
v_{e^*(\cdot)}/2,h^2 v_{e^*(\cdot)})$ law under $\theta^*$ as claimed.

\subsection{Proof of Corollary \ref{ate_asymptotic_minimax_corollary}}

The result is immediate from local asymptotic normality and the local asymptotic
minimax theorem, as stated in Theorem 3.11.5 in
\citet{van_der_vaart_weak_1996}.
(Formally, we consider the submodels
$\theta^*+\tilde h (n v_{e^*(\cdot)})^{-1/2}$ indexed by
$\tilde h$ when applying the definition of local asymptotic normality on p. 412.
Then 
$n^{1/2}[ATE(\theta^*+\tilde h (n v_{e^*(\cdot)})^{-1/2})-ATE(\theta^*)]\to
\tilde h v_{e^*(\cdot)}^{1/2}$, so that the derivative condition on the top of
p. 413 holds with $\dot\kappa(t)=v_{e^*(\cdot)}^{1/2}t$.)

\subsection{Proof of Theorem \ref{general_LAN_thm}}

We have
\begin{align*}
  &I_X+\frac{1}{n}\sum_{i=1}^n\sum_{w\in\mathcal{W}}I(W_{n,i}=w)I_{Y(w)|X}(X_i)
    =I_X+\frac{1}{n}\sum_{i=1}^n\sum_{w\in\mathcal{W}}I(W_{n,i}=w)[\lambda(X_i)+\mu' r(X_i,w)]  \\
  &\le I_X+\frac{1}{n}\sum_{i=1}^n(\lambda(X_i)+\mu' c) + o_{P_{\theta^*}}(1)
    = v_{p^*(\cdot)}+o_{P_{\theta^*}}(1)
\end{align*}
where the inequality uses (\ref{arbitrary_sampling_constraints_eq}) and the last
step applies the law of large numbers.
The result now follows from Corollary \ref{lr_clt_corollary}(ii), with
$v_{p^*(\cdot)}$ playing the role of $\tilde I^*$.

\subsection{Proof of Corollary \ref{general_asymptotic_minimax_corollary}}

The result is immediate from local asymptotic normality and the local
asymptotic minimax theorem \citep[][Theorem 3.11.5]{van_der_vaart_weak_1996}.
(As with the proof of Corollary \ref{ate_asymptotic_minimax_corollary}, we
consider the submodels $\theta^*+\tilde h (n v_{p^*()})^{-1/2}$ when applying
the definition of local asymptotic normality on p. 412.)

\end{document}